\DeclarePairedDelimiterX{\set}[1]{\{}{\}}{\,#1\,}  
\DeclarePairedDelimiterX{\norm}[1]{\lVert}{\rVert}{#1} 
\DeclarePairedDelimiterX{\bnorm}[1]{\biggl\lVert}{\biggr\rVert}{#1} 
\DeclarePairedDelimiterX{\abs}[1]{\lvert}{\rvert}{#1} 
\newcommand{\real}{\mathbb{R}}  
\def\E{\mathbb{E}} 
\def\P{\mathbb{P}} 
\def\T{{ \mathrm{\scriptscriptstyle T} }} 
\def\ind{\mathbbm{1}} 
\newcommand\numberthis{\addtocounter{equation}{1}\tag{\theequation}}  
\renewcommand{\hat}{\widehat}  
\renewcommand{\tilde}{\widetilde}
\newcommand{\cl}{\textrm{cl}}
\newcommand{\bd}{\textrm{bd}}
\newcommand{\poi}{\textrm{poi}}
\newcommand{\defeq}{:=}
\newtheorem{theorem}{Theorem}
\newtheorem{lemma}[theorem]{Lemma}
\newtheorem{assumption}{Assumption}
\newtheorem{remark}{Remark}
\newtheorem{definition}{Definition}
\newenvironment{proof}{\textit{Proof:}}{\hfill$\square$}
\title{Demystifying Poisoning Backdoor Attacks \\ from a Statistical Perspective }
\author{
  Ganghua Wang\thanks{Equal contribution.}\\
    School of Statistics \\
   University of Minnesota \\
   \texttt{wang9019@umn.edu} 
   \And
   Xun Xian$^*$\\
   Department of ECE \\
   University of Minnesota \\
   \texttt{xian0044@umn.edu} 
   \AND
Ashish Kundu \\
   Cisco Research \\
   \texttt{ashkundu@cisco.com}
   \And Jayanth Srinivasa \\
   Cisco Research \\
   \texttt{jasriniv@cisco.com}
   \And
   Xuan Bi \\
   Carlson School of Management \\
   University of Minnesota \\
   \texttt{xbi@umn.edu}
   \AND
    Mingyi Hong \\
    Department of ECE \\
   University of Minnesota \\
   \texttt{mhong@umn.edu}
   \And
      Jie Ding \\
    School of Statistics \\
   University of Minnesota \\
   \texttt{dingj@umn.edu}
}
\begin{document}

\maketitle

\begin{abstract}
The growing dependence on machine learning in real-world applications emphasizes the importance of understanding and ensuring its safety. Backdoor attacks pose a significant security risk due to their stealthy nature and potentially serious consequences. Such attacks involve embedding triggers within a learning model with the intention of causing malicious behavior when an active trigger is present while maintaining regular functionality without it.
This paper evaluates the effectiveness of any backdoor attack incorporating a constant trigger, by establishing tight lower and upper boundaries for the performance of the compromised model on both clean and backdoor test data. The developed theory answers a series of fundamental but previously underexplored problems, including (1) what are the determining factors for a backdoor attack's success, (2) what is the direction of the most effective backdoor attack, and (3) when will a human-imperceptible trigger succeed. Our derived understanding applies to both discriminative and generative models. We also demonstrate the theory by conducting experiments using benchmark datasets and state-of-the-art backdoor attack scenarios.
\end{abstract}

\section{Introduction} \label{sec:intro}

Machine learning is widely utilized in real-world applications such as autonomous driving and medical diagnosis~\citep{grigorescu2020survey, oh2020deep}, underscoring the necessity for comprehending and guaranteeing its safety. One of the most pressing security concerns is the backdoor attack, which is characterized by its stealthiness and potential for disastrous outcomes~\citep{li2020backdoor,goldblum2022dataset}. Broadly speaking, a backdoor attack is designed to embed triggers into a learning model to achieve dual objectives: (1) prompt the compromised model to exhibit malicious behavior when a specific attacker-defined trigger is present, and (2) maintain normal functionality in the absence of the trigger, rendering the attack difficult to detect.

Data poisoning is a common tactic used in backdoor attacks~\citep{gu2017badnets,chen2017targeted, liu2017trojaning, turner2018clean,barni2019new, zhao2020clean, nguyen2020input,doan2021lira, nguyen2021wanet, tang2021demon, li2021invisible, bagdasaryan2020backdoor, souri2022sleeper, qi2023revisiting}, demonstrated in Figure~\ref{fig: poisoningdemo}. To carry out a poisoning backdoor attack, the attacker creates a few backdoor data inputs with a specific trigger (e.g., patch~\citep{gu2017badnets} or watermark~\citep{chen2017targeted}) and target labels. These backdoor data inputs are then added to the original ``clean'' dataset to create a ``poisoned'' dataset. A model trained on this poisoned dataset is called a ``poisoned model'' because a model with sufficient expressiveness can learn the supervised relationships in both the clean and backdoor data, leading to abnormal behavior on the backdoor data.
 
Creating effective backdoor triggers is an essential aspect of research on poisoning backdoor attacks. Prior studies have shown that using square patches~\citep{gu2017badnets} or other image sets as triggers~\citep{chen2017targeted} can result in a poisoned model with almost perfect accuracy on both clean and backdoor images in image classification tasks. However, these backdoor triggers are perceptible to the human eye, and they can be detected through human inspections. Consequently, recent research has focused on developing dynamic and human-imperceptible backdoor triggers~\citep{nguyen2020input, bagdasaryan2021blind,doan2021backdoor,doan2021lira,li2021invisible} through techniques such as image wrapping~\citep{nguyen2021wanet} and generative modeling techniques such as VAE~\citep{li2021invisible}. This current line of work aims to improve the efficacy of backdoor attacks and make them harder to detect.
While these poisoning backdoor attacks have demonstrated empirical success, fundamental questions like how to choose an effective backdoor trigger remain unresolved. 

\begin{figure}[tb]
    \centering
    \includegraphics[width =0.7\linewidth]{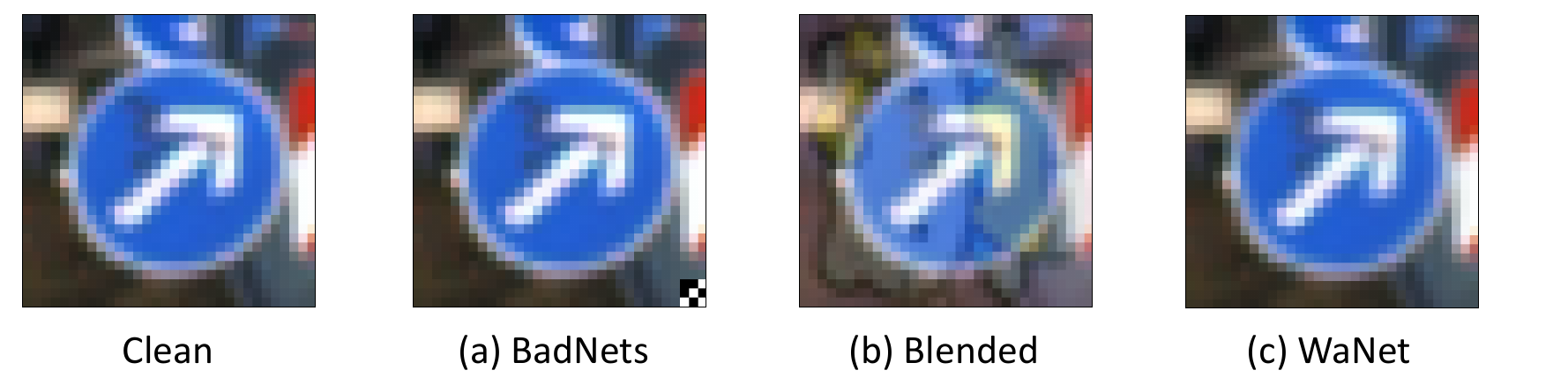}
    
    \caption{{\small Example of popular poisoning attacks on the GTSRB Dataset~\citep{stallkamp2012man}. A clean image and 
    (a) BadNets~\citep{gu2017badnets}: a square patch (backdoor trigger) added at the lower-right corner of the original image, (b) Blended~\citep{chen2017targeted}: 
    a hello kitty (backdoor trigger) embedded into the image, and (c) WaNet~\citep{nguyen2021wanet}: human-imperceptible perturbation (backdoor trigger).
    The poisoned model will predict backdoored images as `20 speed'.}}
    \label{fig: poisoningdemo}
\end{figure}

\subsection{Main contributions}

In this work, we aim to deepen the understanding of the goodness of poisoning backdoor attacks. Specifically, we define an attack as successful if the poisoned model's prediction risk matches that of the clean model on both clean and backdoor data.
Our main contributions are summarized below.

$\bullet$ From a theoretical perspective, we characterize the performance of a backdoor attack by studying the statistical risk of the poisoned model, which is fundamental to understanding the influence of such attacks. In Section~\ref{sec:result}, we provide finite-sample lower and upper bounds for both clean- and backdoor-data prediction performance. In Section~\ref{sec:asymp}, we apply these finite-sample results to the asymptotic regime to obtain tight bounds on the risk convergence rate. 
We further investigate generative setups in Section~\ref{sec:gen} and derive similar results. This analysis, to the authors' best knowledge, gives the first theoretical insights for understanding backdoor attacks in \textit{generative models}.

$\bullet$ From an applied perspective, we apply the developed theory to provide insights into a sequence of questions that are of interest to those studying backdoor attacks: 

    (Q1) \textit{What are the factors determining a backdoor attack's effect?}\\
    We identify three key factors that collectively determine the prediction risk of the poisoned model: the ratio of poisoned data, the direction and magnitude (as measured under the $\ell_2$-norm) of the trigger, and the clean data distribution, as illustrated in Figure~\ref{fig:contri}.

    (Q2) \textit{What is the optimal choice of a trigger with a given magnitude?}\\
    We show the optimal trigger direction is where the clean data density decays the most.

    (Q3) \textit{What is the minimum required magnitude of the trigger for a successful attack?}\\
    We find that the minimum required magnitude depends on the clean data distribution. In particular, when the clean data distribution degenerates, meaning that the support of distribution falls in a subspace, the minimum magnitude can be arbitrarily small.

    \begin{figure}
        \centering
        \includegraphics[width=0.85\textwidth]{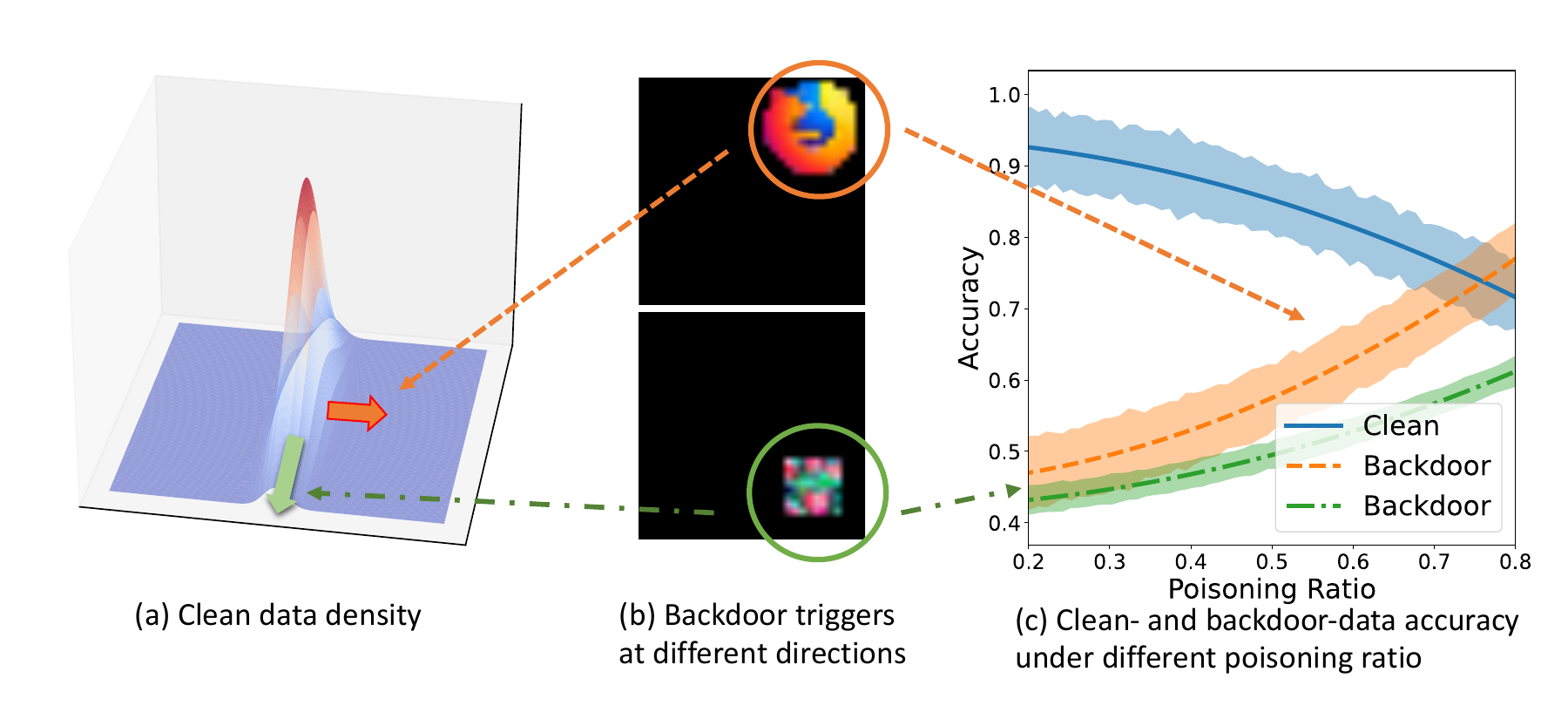}
        
        \caption{
        {\small Illustration of three factors jointly determining the effectiveness of a backdoor attack: clean data distribution, backdoor trigger, and poisoning ratio.}
        }
        \label{fig:contri}
        
    \end{figure}

\subsection{Related work}

\textbf{Backdoor attacks}.
Backdoor attacks manipulate the outputs of deep neural networks (DNNs) on specific inputs while leaving normal inputs unaffected. These attacks can be conducted in three main ways: (1) by poisoning the training data only, (2) by poisoning the training data and interfering with the training process, and (3) by directly modifying the models without poisoning the training data.

Most backdoor attacks~\citep{gu2017badnets,chen2017targeted, liu2017trojaning, turner2018clean,barni2019new, zhao2020clean, nguyen2020input,doan2021lira, nguyen2021wanet, tang2021demon, li2021invisible, bagdasaryan2020backdoor, souri2022sleeper, qi2023revisiting}  belong to the first two approaches, namely poisoning the training data and/or interfering with the training process. These attacks primarily focus on designing effective backdoor triggers. For example, WaNet~\citep{nguyen2021wanet} employs a smooth warping field to generate human-imperceptible backdoor images, while ISSBA~\citep{li2021invisible} produces sample-specific invisible additive noises as backdoor triggers by encoding an attacker-specified string into benign images using an encoder-decoder network. Another line of research aims to minimize the visibility of backdoor triggers in the latent space of the backdoored models~\citep{doan2021lira, tang2021demon, qi2023revisiting}. These methods strive to reduce the separation between clean and backdoor data in the latent space. For example, the approach in~\citep{qi2023revisiting} utilizes asymmetric triggers during the test and inference stages to minimize the distance between clean and backdoor data.
Additionally, besides incorporating backdoor triggers into the training data, the method introduced in~\citep{doan2021lira} also adjusts the training objective by adding a term that regularizes the distance between the latent representations of clean and backdoor data.
In this work, for theoretical studies, we specifically consider the scenario where the attacker is only allowed to modify the training data.

\textbf{Backdoor defense}. 
Backdoor defense can be broadly classified into two types: (1) training stage defense and (2) test stage defense.
In training stage defense, the defender has access to the training data. This has led to a series of literature \citep{chou2020sentinet, tran2018spectral, chen2018detecting, wallace2020concealed, tang2021demon, hayase2021spectre, hammoudeh2022identifying, cui2022unified} focused on detecting and filtering out the backdoor data during the training process. Various methods have been proposed, such as clustering techniques \citep{chen2018detecting} and robust statistics \citep{tran2018spectral}, to identify and remove the poisoned training data, enabling the training of clean models without backdoors. Additionally, some approaches involve augmenting the training data to mitigate the impact of backdoor data on the trained model.
On the other hand, the test stage backdoor defense \citep{gao2019strip, wang2019neural} focuses on the scenario where the defender is given a trained, possibly backdoored model without access to the training data. In such cases, the defender is typically assumed to have access to a small set of clean data that have the same distribution as the clean data, and the defender will use the clean data set and the trained model to reconstruct/reverse the trigger~\citep{wang2019neural}, prune some neurons related to backdoor data~\citep{liu2018fine}, and detect if a future input is a clean point or not~\citep{gao2019strip}.

\textbf{Research works that aim to understand backdoor learning}. 
\citet{manoj2021excess} 
quantifies a model's capacity to memorize backdoor data using a concept similar to VC-dimension~\citep{vapnik1994measuring} and shows that overparameterized linear model models have higher memorization capacity and are more susceptible to attacks. 
\citet{Backdoor1} proposes the `adaptivity hypothesis' to explain the success of a backdoor attack. In particular, the hypothesis states that a good backdoor attack should not change the predicted value too much before and after the backdoor attack. Based on that, the work suggests a good attack should have backdoor data distribution far away from clean data in a probability sense.
While those two studies provide valuable insights into the success of backdoor attacks, they do not quantify how effective a given backdoor attack is.

\section{Preliminary} \label{sec:pre}

\textbf{Notations.}
We will use $\P$, $\E$, and $\ind$ to denote probability, expectation, and an indicator function, respectively. For two sequences of real numbers $a_n$ and $b_n$, $a_n \lesssim b_n$ means $\limsup_{n\to\infty} a_n/b_n \leq C$ for some constant $C$, $a_n \gtrsim b_n$ means $b_n \lesssim a_n$, $a_n \asymp b_n$ means $b_n \lesssim a_n$ and $a_n \lesssim b_n$ hold simultaneously, and $a_n=o(b_n)$ means $\lim_{n\to\infty} a_n/b_n = 0$. For a vector $w=[w_1,\ldots,w_d]$, let $\norm{w}_q=(\sum_{i=1}^d \abs{w_i}^q)^{1/q}$ denote its $\ell_q$-norm. For two vectors $w$ and $u$, $\cos(w, u) \defeq w^\T u/(\norm{w}_2\norm{u}_2)$ is the cosine of their angle.
Frequently used symbols are collected in Table~\ref{tab:symbol}.
\begin{table}[tb]
    \centering
    \caption{{\small Summary of commonly used notations}}
    \label{tab:symbol}
    \scalebox{0.8}{
    \begin{tabular}{lp{4in}}
        \toprule
        \textbf{Symbol} & \textbf{Meaning} \\
        \midrule
        $n$ & Sample size of the training data  \\
         $\rho$ & Proportion of backdoor data in the training data \\
         $\eta$ & The perturbation or trigger of the backdoor attack \\
         $\mu^{\cl}$, $\mu^{\bd}$, $\mu^{\poi}$ & Joint distribution of clean data, backdoor data, and poisoned data \\
         $f^{\cl}_*$, $f^{\bd}_*$, $ f^{\poi}_*$ & Regression function of clean data, backdoor data, and poisoned data \\
         $\hat{f}^{\cl}$, $\hat{f}^{\poi}$  & Learned model based on the clean data and poisoned data\\ 
         $r_n^{\cl}(f), r_n^{\bd}(f), r_n^{\poi}(f)$  & Statistical risk of a model $f$ on clean, backdoor, and poisoned input \\
         \bottomrule
    \end{tabular}
    }
\end{table}

\textbf{Learning scenario.} We first consider a binary classification task that involves a vector of predictor variables $X \in \real^{p}$ and a label $Y \in \{0,1\}$, and extend to a generative setup in Section~\ref{sec:gen}. Here, the predictor $X$ can also be an embedding of the original input, such as the output of the feature extraction layer of a neural network. 
The learner aims to estimate the conditional probability $f^{\cl}_* \defeq \P(Y=1 \mid X)$ given observations.

\begin{remark}[From probability prediction to classifier]
Once a probability estimator $\hat{f}: X \mapsto [0,1]$ of $f^{\cl}_*$ is obtained, the learner can construct a classifier $g(\hat{f}) \defeq \ind_{\hat{f} > 1/2}$. That is, predicting any input $X$ as label one if $\hat{f}(X) > 1/2$ and as label zero otherwise. 
Suppose the learner wants to minimize the classification error with zero-one loss, it is known that the closer $\hat{f}$ to $f^{\cl}_*$, the smaller the classification error for $g(\hat{f})$~\cite[Theorem 2.2]{devroye2013probabilistic}. Additionally, the classifier $g(f^{\cl}_*)$, called the Bayes classifier, achieves the minimal error~\citep{gyorfi2002distribution}. 
\end{remark}

\textbf{Threat model.} 
In this study, we consider a commonly used attack scenario where attackers can only corrupt data, but cannot tamper with the training process. Several state-of-the-art backdoor attacks are implemented within this attack scenario, including BadNets~\citep{gu2017badnets}, Blend~\citep{chen2017targeted}, and Trojan~\citep{liu2017trojaning}.
In particular, each data input in the clean dataset has probability $\rho \in (0,1)$ to be chosen as a backdoor data, meaning that its predictor will be shifted by $\eta \in \mathbb{R}^p$ and the response will be relabelled as zero, regardless of its ground-truth class. Here, we choose the target label as zero without loss of generality. 

\begin{definition}[Backdoor-Triggered Data and Model]
    (1) The learner wishes to train a model based on clean data $\mathcal{D}^{\cl} = \{(X_i, Y_i), i=1,\ldots,n\}$, which are IID sampled from $\mu^{\cl}$, the distribution of the clean labeled data $(X, Y)$.
    %
    (2) The attacker will provide the learner with backdoored data in the form of $(\hat{X} = X+\eta, \hat{Y}=0)$, whose distribution is denoted by $\mu^{\bd}$, where $X$ follows the clean data distribution. 
    (3) The learner will actually receive a poisoned dataset with the distribution $\mu^{\poi}\defeq (1-\rho)\mu^{\cl} + \rho\mu^{\bd}$. As such, a poisoned dataset can be represented as $\mathcal{D}^{\poi}_{\eta} = \{(\tilde{X}_i,\tilde{Y}_i), i=1,\ldots,n\}$, where $\tilde{X}_i = X_i+\eta \ind_{Z_i=1}, \ \tilde{Y}_i = Y_i \ind_{Z_i=0},$ and
    $Z_i$'s are independent Bernoulli variables with $\P(Z_i=1)=\rho$.
    (4) The learner thus trains a poisoned model  $\hat{f}^{\poi}$ on the poisoned dataset $\mathcal{D}^{\poi}_{\eta}$. 
\end{definition}

We can verify that $(\tilde{X}_i,\tilde{Y}_i)$'s are IID sampled from $\mu^{\poi}$. Additionally, $(\tilde{X}_i,\tilde{Y}_i)$'s are IID sampled from $\mu^{\cl}$ conditional on $Z_i = 0$, while IID sampled from $\mu^{\bd}$ conditional on $Z_i = 1$.
In other words, when $Z_i=1$ (with probability $\rho$), the input $X_i$ will be backdoor perturbed and its associated label will be the backdoor-targeted label zero. 
Notably, we assumed the standard backdoor scenario where the attacker can generate the backdoor data by choosing $\rho$ and $\eta$, but cannot directly influence the learning model. We refer readers to~\citep{li2020backdoor} for other types of attack scenarios where attackers can interfere with the training process, such as modifying the loss functions and the model parameters.


\textbf{Dual-Goal of the attacker.} A successful backdoor attack has two goals. First, the accuracy of the poisoned model $\hat{f}^{\poi}$ in predicting a typical clean data input remains as high as the clean model $\hat{f}^{\cl}$. This makes an attack challenging to identify. Second, the poisoned model can be accurately ``triggered'' in the sense that it can produce consistently high accuracy in classifying a backdoor-injected data input as the targeted label. We quantitatively formulate the above goals as follows.

$\bullet$ \textit{Prediction performance of $\hat{f}^{\poi}$ on clean data.} We first introduce a prediction loss $\ell^\cl(\hat{f}^{\poi}, f^{\cl}_*)$ that measures the discrepancy between the poisoned model and the conditional probability of clean data distribution. 
    In this work, we use the expected loss, that is,
    $ 
        \ell^\cl(\hat{f}^{\poi}, f^{\cl}_*) \defeq  \E_{X \sim \mu^{\cl}_X} \bigl\{\ell(\hat{f}^{\poi}(X), f^{\cl}_*(X)) \bigr\},
    $ 
    where $\mu^{\cl}_X$ is the distribution of a clean data input $X$, and $\ell(\cdot,\cdot): [0,1]\times[0,1] \to \real^+$ is a general loss function.
    Then, we can evaluate the goodness of $\hat{f}^{\poi}$ on clean data by the average prediction loss, also known as the \textit{statistical risk}, as
    $ 
        r_n^{\cl}(\hat{f}^{\poi}) \defeq \E_{\mathcal{D}^{\poi}_{\eta}} \bigl\{\ell^\cl(\hat{f}^{\poi}, f^{\cl}_*)\bigr\},
    $ 
    where the expectation is taken over the training data. 


$\bullet$ \textit{Prediction performance of $\hat{f}^{\poi}$ on backdoor data.} In this case, we need a different prediction loss $\ell^\bd(\hat{f}^{\poi}, f^{\bd}_*)$, where $f^{\bd}_*(X)$ is the conditional probability under $\mu^{\bd}$ and equals zero under our setup. Analogous to the previous case, we have
    $ 
        \ell^\bd(\hat{f}^{\poi}, f^{\bd}_*) \defeq  \E_{X \sim \mu^{\bd}_X} \bigl\{\ell(\hat{f}^{\poi}(X), f^{\bd}_*(X)) \bigr\}, 
    $ 
    and the statistical risk of $\hat{f}^{\poi}$ on backdoor data is:
    $ 
         r_n^{\bd}(\hat{f}^{\poi}) \defeq \E_{\mathcal{D}^{\poi}_{\eta}}  \bigl\{\ell^\bd(\hat{f}^{\poi}, f^{\bd}_*)\bigr\},
    $ 
    where $\mu^{\bd}_X$ is the distribution of a backdoor data input $X$. 

    \begin{definition}[Successful Backdoor Attack] \label{def_success}
        Given a distribution class $\mathcal{D}$, a backdoor attack is said to be successful if the following holds: 
        $ 
            \max\{r_n^{\cl}(\hat{f}^{\poi}), r_n^{\bd}(\hat{f}^{\poi}) \} \lesssim r_n^{\cl}(\hat{f}^{\cl}).
        $ 
    \end{definition}

    Therefore, we are interested in $r_n^{\cl}(\hat{f}^{\poi})$ and $r_n^{\bd}(\hat{f}^{\poi})$, which will be studied in the following sections.

\section{Finite-sample analysis of the backdoor attack} \label{sec:result}

    This section establishes bounds for the statistical risks of a poisoned model on clean and backdoor data inputs for a finite sample size $n$. The results imply key elements for a successful backdoor attack. We begin by introducing a set of assumptions and definitions, followed by the main results.

    \begin{definition} \label{def_setup}
        For $i=0,1$, let $\nu_i(\cdot)$ be the density of $X$ given $Y=i$ for clean data, and $m_i = \E_{\mu^{\cl}}(X \mid Y=i)$ is the conditional mean. 
        Let $h_i^\eta(r) \defeq \P_{\nu_i}( \abs{(X-m_i)^\T\eta} \geq r\norm{\eta}_2)$ be the tail probability of $X$ along the direction of $\eta$ conditional on $Y=i$,
        and $g_i^\eta(r) \defeq \min_{\{x: \norm{x-\eta}_2 \leq r\} }\nu_i(m_i-x)$ be the minimum density of the points in a $r$-radius ball deviating from the center by $\eta$.
    \end{definition}
    
    \begin{definition}
        Let $\mu^{\poi}_X$ be the distribution of $X$ for poisoned data, we define
        \begin{align*}
             f^{\poi}_*(x) 
             &\defeq \E_{(X,Y)\sim \mu^{\poi}} (Y \mid X=x), \quad
            r_n^{\poi}(\hat{f}^{\poi}) 
            \defeq \E_{\mathcal{D}^{\poi}_{\eta}}\bigl[ \E_{X \sim \mu^{\poi}_X} \bigl\{\ell(\hat{f}^{\poi}(X),  f^{\poi}_*(X))\bigr\}\bigr].
        \end{align*}
    \end{definition}
    
    \begin{assumption}[Predictor distribution]\label{asmp:dist}
        For any $\eta \in \real^d$ and $0<c_1<c_2$, we have $\nu_i(m_i-c_1\eta) \geq \nu_i(m_i-c_2\eta)$, $i=0, 1$. 
    \end{assumption}

    
    \begin{assumption}[Loss function]\label{asmp:loss}
        The loss function $\ell: [0,1]\times[0,1] \to \real^+$ is $(C,\alpha)$-H\"{o}lder continuous for $0<\alpha \leq 1$ and $C>0$. That is, for all $x,y,z\in[0,1]$, we have
        \begin{align*}
            \abs{\ell(x,y) - \ell(x,z)} \leq C \abs{y-z}^{\alpha}, 
            \quad \abs{\ell(x,y) - \ell(z,y)} \leq C \abs{x-z}^{\alpha}.
        \end{align*}
        Also, there exist constants $\beta \geq 1$ and $C_\beta>0$ such that $C_\beta \abs{x-y}^\beta \leq  \ell(x,y)$ for any $x,y\in[0,1]$.
    \end{assumption}
    \begin{remark}[Discussions on the technical assumptions]\label{re:loss}
        Assumption~\ref{asmp:dist} says that the conditional density of $X$ is monotonously decreasing in any direction. 
        Common distribution classes such as Gaussian, Exponential, and student-$t$ satisfy this condition. Also, we only need it to be fulfilled when $c_1, c_2$ are large enough. 
        Many common loss functions satisfy Assumption~\ref{asmp:loss}. For example, $\alpha=\min\{\gamma, 1\}, \beta=\max\{\gamma, 1\}$ for $\ell(x,y)=\abs{x-y}^\gamma, \gamma>0$, and $\alpha=1,\beta=2$ for Kullback–Leibler divergence when arguments are bounded away from 0 and 1. The second condition in Assumption~\ref{asmp:loss} ensures that the loss is non-degenerate, which is only required to derive lower bounds.
    \end{remark}
    
    
    \begin{theorem}[Finite-sample upper bound]\label{thm:finite}
        Under Assumptions~\ref{asmp:dist} and~\ref{asmp:loss}, when $\norm{\eta}_2 \geq 4\cos(\eta, m_1-m_0)\norm{m_1-m_0}_2$, 
        we have 
        \begin{align*}
           r_n^{\cl}(\hat{f}^{\poi}) & \leq \frac{1}{1-\rho}r_n^{\poi}(\hat{f}^{\poi}) + \frac{C}{(1-\rho)^\alpha}\biggl[\max_{i=0,1}\biggl\{ h_i^{\eta}(\norm{\eta}_2/4)\biggr\}\biggr]^\alpha,
            \\ 
            r_n^{\bd}(\hat{f}^{\poi}) & \leq 
            \rho^{-1}r_n^{\poi}(\hat{f}^{\poi})+ \rho^{-\alpha}C\biggl[\max_{i=0,1}\biggl\{ h_i^{\eta}(\norm{\eta}_2/4)\biggr\}\biggr]^\alpha.
        \end{align*}
    \end{theorem}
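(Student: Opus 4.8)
The plan is to reduce both $r_n^{\cl}$ and $r_n^{\bd}$ to the poisoned risk $r_n^{\poi}$ using two elementary ingredients and then to pay for the mismatch in the target regression function with a tail term. The first ingredient is measure domination: since $\mu^{\poi}_X = (1-\rho)\mu^{\cl}_X + \rho\mu^{\bd}_X$, we have $\mu^{\poi}_X \ge (1-\rho)\mu^{\cl}_X$ and $\mu^{\poi}_X \ge \rho\mu^{\bd}_X$, so $\E_{\mu^{\cl}_X}\phi \le (1-\rho)^{-1}\E_{\mu^{\poi}_X}\phi$ and $\E_{\mu^{\bd}_X}\phi \le \rho^{-1}\E_{\mu^{\poi}_X}\phi$ for any nonnegative $\phi$. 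The second ingredient is the first H\"older bound in Assumption~\ref{asmp:loss}, which swaps the target: $\ell(\hat{f}^{\poi}(x), f^{\cl}_*(x)) \le \ell(\hat{f}^{\poi}(x), f^{\poi}_*(x)) + C\abs{f^{\cl}_*(x) - f^{\poi}_*(x)}^\alpha$, and likewise with $f^{\cl}_*$ replaced by $f^{\bd}_* \equiv 0$. Applying H\"older pointwise, then measure domination to the $\ell(\hat{f}^{\poi}, f^{\poi}_*)$ term, and finally taking $\E_{\mathcal{D}^{\poi}_\eta}$ (the difference terms are data-free), I obtain $r_n^{\cl}(\hat{f}^{\poi}) \le (1-\rho)^{-1}r_n^{\poi}(\hat{f}^{\poi}) + C\,\E_{\mu^{\cl}_X}\abs{f^{\cl}_* - f^{\poi}_*}^\alpha$ and $r_n^{\bd}(\hat{f}^{\poi}) \le \rho^{-1}r_n^{\poi}(\hat{f}^{\poi}) + C\,\E_{\mu^{\bd}_X}\abs{f^{\poi}_*}^\alpha$. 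It remains to control the two expectations by $(1-\rho)^{-\alpha}[\max_i h_i^\eta(\norm{\eta}_2/4)]^\alpha$ and $\rho^{-\alpha}[\max_i h_i^\eta(\norm{\eta}_2/4)]^\alpha$.

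For the crux I first compute the poisoned regression function. Writing $\pi_i \defeq \P(Y=i)$, one has $\mu^{\cl}_X f^{\cl}_* = \pi_1\nu_1$, $\mu^{\bd}_X(x) = \mu^{\cl}_X(x-\eta)$, and only the clean component carries mass on $Y=1$, so $f^{\poi}_*(x) = (1-\rho)\pi_1\nu_1(x)/\mu^{\poi}_X(x)$. This yields the pointwise identity $f^{\cl}_*(x) - f^{\poi}_*(x) = f^{\cl}_*(x)\,\rho\mu^{\bd}_X(x)/\mu^{\poi}_X(x) \ge 0$. Because $0<\alpha\le 1$, Jensen's inequality gives $\E_{\mu^{\cl}_X}\abs{f^{\cl}_* - f^{\poi}_*}^\alpha \le (\E_{\mu^{\cl}_X}\abs{f^{\cl}_* - f^{\poi}_*})^\alpha$, so it suffices to bound the $L^1$ quantity $\E_{\mu^{\cl}_X}\abs{f^{\cl}_* - f^{\poi}_*} = \rho\int \pi_1\nu_1(x)\,\mu^{\bd}_X(x)/\mu^{\poi}_X(x)\,dx$ by $(1-\rho)^{-1}\max_i h_i^\eta(\norm{\eta}_2/4)$. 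A convenient bonus is that the backdoor term is the same integral up to a constant: since $\E_{\mu^{\bd}_X}f^{\poi}_* = (1-\rho)\int\pi_1\nu_1\,\mu^{\bd}_X/\mu^{\poi}_X = \frac{1-\rho}{\rho}\E_{\mu^{\cl}_X}\abs{f^{\cl}_* - f^{\poi}_*}$, the same $L^1$ bound plus Jensen delivers $\E_{\mu^{\bd}_X}\abs{f^{\poi}_*}^\alpha \le \rho^{-\alpha}[\max_i h_i^\eta(\norm{\eta}_2/4)]^\alpha$, so I only have to estimate one integral.

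That $L^1$ estimate is the step I expect to be the main obstacle, and I would handle it by splitting the domain along the projection $t(x) \defeq (x-m_1)^\T\eta/\norm{\eta}_2$ at the threshold $\norm{\eta}_2/4$. On the forward region $\{t(x) \ge \norm{\eta}_2/4\}$ I bound the likelihood ratio crudely by $\rho\mu^{\bd}_X/\mu^{\poi}_X \le 1$, so the contribution is at most $\pi_1\P_{\nu_1}((X-m_1)^\T\eta \ge \norm{\eta}_2^2/4) \le \pi_1 h_1^\eta(\norm{\eta}_2/4)$. On the complementary region I instead use $\mu^{\poi}_X \ge (1-\rho)\mu^{\cl}_X$ together with $\pi_1\nu_1 \le \mu^{\cl}_X$ to reduce the contribution to $\frac{\rho}{1-\rho}\int_{\{t(x)<\norm{\eta}_2/4\}}\mu^{\bd}_X(x)\,dx$; the change of variables $x = x'+\eta$ with $x'\sim\mu^{\cl}_X$ turns this into the clean-data probability $\P_{\mu^{\cl}_X}((m_1-X')^\T\eta > \tfrac{3}{4}\norm{\eta}_2^2)$. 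Decomposing $X'$ by its class, the class-$1$ part is directly a $\nu_1$-tail bounded by $h_1^\eta(3\norm{\eta}_2/4)\le h_1^\eta(\norm{\eta}_2/4)$; the class-$0$ part is the delicate cross-class term, where I write $(m_1-X')^\T\eta = (m_1-m_0)^\T\eta + (m_0-X')^\T\eta$ and invoke the hypothesis $\norm{\eta}_2 \ge 4\cos(\eta,m_1-m_0)\norm{m_1-m_0}_2$, equivalently $(m_1-m_0)^\T\eta \le \tfrac{1}{4}\norm{\eta}_2^2$, to contain the event in $\{(m_0-X')^\T\eta > \tfrac{1}{2}\norm{\eta}_2^2\}$ and hence bound it by $h_0^\eta(\norm{\eta}_2/2)\le h_0^\eta(\norm{\eta}_2/4)$. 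Summing the two regions and using $\pi_0+\pi_1=1$ gives $\E_{\mu^{\cl}_X}\abs{f^{\cl}_* - f^{\poi}_*} \le (\pi_1 + \tfrac{\rho}{1-\rho})\max_i h_i^\eta(\norm{\eta}_2/4) \le (1-\rho)^{-1}\max_i h_i^\eta(\norm{\eta}_2/4)$, exactly what the two reductions require.

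The geometric hypothesis on $\eta$ enters in exactly one place, the cross-class term, and it is precisely what lets a single threshold $\norm{\eta}_2/4$ simultaneously capture the tails of both classes; notably, along this tail-probability route the monotonicity Assumption~\ref{asmp:dist} is not needed, as it would be relevant only if one compared the shifted and unshifted densities pointwise rather than changing variables. Plugging the two expectation bounds into the reductions of the first paragraph then yields both inequalities, the only genuinely nontrivial ingredient being the cross-class tail, with everything else reducing to H\"older continuity, a density-ratio identity, and a change of variables.
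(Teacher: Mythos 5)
Your proof is correct and follows essentially the same route as the paper's: Hölder continuity to swap the target regression function, measure domination $\mu^{\poi}_X \geq (1-\rho)\mu^{\cl}_X$ and $\mu^{\poi}_X \geq \rho\mu^{\bd}_X$, the Bayes identity for $f^{\poi}_*$, the relation $\E_{\mu^{\bd}_X} f^{\poi}_* = \rho^{-1}(1-\rho)\E_{\mu^{\cl}_X}\abs{f^{\cl}_*-f^{\poi}_*}$, Jensen's inequality, and a tail/non-tail split followed by a change of variables in which the geometric hypothesis controls the cross-class term. The only differences are cosmetic — you split one-sidedly at threshold $\norm{\eta}_2/4$ where the paper uses the two-sided set $S^\eta(\norm{\eta}_2/2)$ and isolates the cross-class step as Lemma~\ref{lemma:densityub} — and your observation that Assumption~\ref{asmp:dist} is not actually used along this route also holds for the paper's own argument.
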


    \begin{theorem}[Finite-sample lower bound]\label{thm:finite_lb}
        Suppose $\norm{\eta}_2>2c >0$, where $c$ is a universal constant. 
        Under Assumptions~\ref{asmp:dist} and~\ref{asmp:loss}, we have 
        \begin{align*}
           r_n^{\cl}(\hat{f}^{\poi}) & \geq 
           \rho^\beta C_1\bigl\{  g_1^\eta(c)\bigr\}^\beta 
            - C_2 \bigl\{r_n^{\poi}(\hat{f}^{\poi})\bigr\}^{\alpha/\beta},
            \\  r_n^{\bd}(\hat{f}^{\poi}) & \geq (1-\rho)^\beta C_1\bigl\{  g_1^\eta(c)\bigr\}^\beta 
            - C_2 \bigl\{r_n^{\poi}(\hat{f}^{\poi})\bigr\}^{\alpha/\beta},
        \end{align*}
        where $C_1, C_2$ are positive constants that only depend on the clean data distribution and $c$. 
    \end{theorem}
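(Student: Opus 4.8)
The plan is to lower bound the clean risk by the \emph{irreducible bias} of the poisoned target $f^{\poi}_*$ relative to $f^{\cl}_*$, minus the estimation error of $\hat f^{\poi}$ around $f^{\poi}_*$. Writing $\pi_i\defeq\P(Y=i)$, Bayes' rule on $\mu^{\poi}$ gives $f^{\poi}_*(x)=(1-\rho)\pi_1\nu_1(x)/\{(1-\rho)\mu^{\cl}_X(x)+\rho\,\mu^{\cl}_X(x-\eta)\}$, where I use that the label $1$ arises only from the clean component and that $\mu^{\bd}_X(x)=\mu^{\cl}_X(x-\eta)$. Subtracting from $f^{\cl}_*(x)=\pi_1\nu_1(x)/\mu^{\cl}_X(x)$, the gap factors as $D(x)\defeq f^{\cl}_*(x)-f^{\poi}_*(x)=\frac{\rho\,\pi_1\nu_1(x)\,\mu^{\cl}_X(x-\eta)}{\mu^{\cl}_X(x)\{(1-\rho)\mu^{\cl}_X(x)+\rho\,\mu^{\cl}_X(x-\eta)\}}\ge 0$, exhibiting the leading factor $\rho$ and the shifted density $\mu^{\cl}_X(x-\eta)$ that will connect to $g_1^\eta$. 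First I would split the loss using the $(C,\alpha)$-H\"older continuity in its first argument: pointwise, $\ell(\hat f^{\poi},f^{\cl}_*)\ge\ell(f^{\poi}_*,f^{\cl}_*)-C\abs{\hat f^{\poi}-f^{\poi}_*}^\alpha$. Taking $\E_{\mu^{\cl}_X}$ and then $\E_{\mathcal{D}^{\poi}_\eta}$ gives $r_n^{\cl}(\hat f^{\poi})\ge\E_{\mu^{\cl}_X}\{\ell(f^{\poi}_*,f^{\cl}_*)\}-C\,\E_{\mathcal{D}^{\poi}_\eta}\E_{\mu^{\cl}_X}\{\abs{\hat f^{\poi}-f^{\poi}_*}^\alpha\}$, with the first term deterministic, and the non-degeneracy bound $\ell(x,y)\ge C_\beta\abs{x-y}^\beta$ lower bounds it by $C_\beta\E_{\mu^{\cl}_X}\{D^\beta\}$.

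The core step is to lower bound $\E_{\mu^{\cl}_X}\{D^\beta\}$. I would restrict the integral to the ball $R=\{x:\norm{x-m_1}_2\le c\}$. For $x\in R$, taking $x'=\eta-(x-m_1)$ gives $\norm{x'-\eta}_2\le c$ and $\nu_1(x-\eta)=\nu_1(m_1-x')\ge g_1^\eta(c)$, hence $\mu^{\cl}_X(x-\eta)\ge\pi_1 g_1^\eta(c)$. The separation $\norm{\eta}_2>2c$, together with the monotonicity of Assumption~\ref{asmp:dist}, keeps the remaining factors $\nu_1(x)$, $\mu^{\cl}_X(x)$, and the denominator bounded between positive constants depending only on the clean distribution and $c$. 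This yields $D(x)\ge c_0\,\rho\, g_1^\eta(c)$ on $R$ and $\int_R\mu^{\cl}_X\ge c_0'>0$, so $\E_{\mu^{\cl}_X}\{D^\beta\}\ge C_1'\rho^\beta\{g_1^\eta(c)\}^\beta$ with $C_1'$ depending only on the clean distribution and $c$.

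For the error term I would pass to $\mu^{\poi}_X$ via $\mu^{\poi}_X\ge(1-\rho)\mu^{\cl}_X$, giving $\E_{\mu^{\cl}_X}\{\abs{\hat f^{\poi}-f^{\poi}_*}^\alpha\}\le(1-\rho)^{-1}\E_{\mu^{\poi}_X}\{\abs{\hat f^{\poi}-f^{\poi}_*}^\alpha\}$, and then convert the $\alpha$-power to the $\beta$-power by Jensen's inequality, using concavity of $t\mapsto t^{\alpha/\beta}$ since $\alpha\le1\le\beta$, applied to the joint expectation: $\E_{\mathcal{D}^{\poi}_\eta}\E_{\mu^{\poi}_X}\{\abs{\hat f^{\poi}-f^{\poi}_*}^\alpha\}\le\bigl(\E_{\mathcal{D}^{\poi}_\eta}\E_{\mu^{\poi}_X}\{\abs{\hat f^{\poi}-f^{\poi}_*}^\beta\}\bigr)^{\alpha/\beta}\le(r_n^{\poi}/C_\beta)^{\alpha/\beta}$, where the last inequality again uses $C_\beta\abs{\cdot}^\beta\le\ell$ and $\E_{\mathcal{D}^{\poi}_\eta}\E_{\mu^{\poi}_X}\ell=r_n^{\poi}$. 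This is precisely where the exponent $\alpha/\beta$ in the statement originates. Combining the three estimates yields $r_n^{\cl}\ge C_\beta C_1'\rho^\beta\{g_1^\eta(c)\}^\beta-C_2\{r_n^{\poi}\}^{\alpha/\beta}$.

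The backdoor bound is symmetric. Since $f^{\bd}_*\equiv0$, the same H\"older split produces the bias term $C_\beta\E_{\mu^{\bd}_X}\{(f^{\poi}_*)^\beta\}$ and the identical error term. On the same ball $R$ around $m_1$, one checks $f^{\poi}_*(x)\ge(1-\rho)c_1$ for a constant $c_1$, while $\mu^{\bd}_X(x)=\mu^{\cl}_X(x-\eta)\ge\pi_1 g_1^\eta(c)$ supplies the measure; integrating gives a term of order $(1-\rho)^\beta g_1^\eta(c)$, and the bound $g_1^\eta(c)\le\norm{\nu_1}_\infty$ rewrites the linear factor as $\{g_1^\eta(c)\}^\beta$ up to a constant, matching the second inequality. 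I expect the main obstacle to be the core step of the second paragraph: deriving the closed form of $f^{\poi}_*$, selecting an integration region on which the gap is provably bounded below, and cleanly separating the $\eta$-dependence, which is isolated in $g_1^\eta(c)$ and $\rho$, from the distribution-and-$c$-dependent constants; Assumption~\ref{asmp:dist} and the separation $\norm{\eta}_2>2c$ are what make this separation possible.
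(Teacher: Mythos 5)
Your proposal is correct and follows essentially the same route as the paper's proof: the H\"older/non-degeneracy split of the risk into a bias term $\E\{\ell(f^{\poi}_*, f^{\cl}_*)\}$ minus a $\{r_n^{\poi}(\hat f^{\poi})\}^{\alpha/\beta}$ estimation term, the same explicit Bayes formula for $f^{\cl}_*-f^{\poi}_*$, and the same restriction to a radius-$c$ ball around $m_1$ to extract $g_1^\eta(c)$. The only cosmetic difference is the backdoor-data bias term: the paper routes it through the change-of-measure identity $\E_{X\sim\mu^{\bd}_X}\{f^{\poi}_*(X)\}=\rho^{-1}(1-\rho)\E_{X\sim\mu^{\cl}_X}\abs{f^{\poi}_*(X)-f^{\cl}_*(X)}$ and then applies Jensen, whereas you bound $\E_{X\sim\mu^{\bd}_X}\{(f^{\poi}_*(X))^\beta\}$ directly on the same ball, obtaining a (slightly stronger) bound linear in $g_1^\eta(c)$ that you then legitimately weaken to the stated $\{g_1^\eta(c)\}^\beta$ form.
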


    \textbf{Determining factors for a backdoor attack’s success.}
        Through the risk bounds provided by Theorems~\ref{thm:finite} and \ref{thm:finite_lb}, 
        we can identify factors contributing to the poisoned model's performance and know how they influence the performance.
        Clearly, the ratio of poisoned data $\rho$ will significantly change both upper and lower bounds. Next, we assume that $\rho$ is fixed and identify other crucial factors. 
        %
        Note that each bound involves two quantities: the risk of $\hat{f}^{\poi}$ on poisoned data $r_n^{\poi}(\hat{f}^{\poi})$ and a bias terms of $h_i^\eta$ or $g_i^\eta$ brought by the backdoor data. 
        By our definition, $r_n^{\poi}(\hat{f}^{\poi})$ means the ordinary statistical risk in predicting a data input that follows $\mu^{\poi}$, which is the poisoned data. For many classical learning algorithms, this statistical risk vanishes as the sample size goes to infinity. In contrast, the bias term depends on $\eta$ only and will not change when $\eta$ is fixed. 
        Therefore, the prediction risk of the poisoned model is determined by the bias term when the sample size is sufficiently large.
        The bias term is jointly determined by two factors, \textit{the direction and magnitude (in $\ell_2$-norm) of $\eta$, and the clean data distribution}.
        In summary, we have the following \textbf{key observations}:

        (1) A large backdoor data ratio $\rho$ will damage the performance on clean data, while a small $\rho$ will lead to unsatisfactory performance on backdoor data.

        (2) A larger magnitude of $\eta$ leads to a more successful backdoor attack. This is because Assumption~\ref{asmp:dist} ensures that $h_i^\eta$ and $g_i^\eta$ are monotonously decreasing functions, thus both risks will decrease as the magnitude of $\eta$ increases. Intuitively, a large $\eta$ means the backdoor data is more distant from the clean data, reducing its impact on the poisoned model and resulting in better performance.

        (3) When the magnitude of $\eta$ is fixed, choosing $\eta$ along the direction that clean data density decays the fastest leads to the most successful backdoor attack.
            This can be seen from the fact that both the upper and lower bounds of the risk are smallest when $\eta$ is chosen to minimize the density and tail probability in the corresponding direction.

            

        
        The results above provide general insights into the impact of a backdoor attack on the performance of the poisoned model. Though the exact value of $r_n^{\poi}(\hat{f}^{\poi})$ is often unknown for a specific sample size $n$, its rate can often be determined. Thus, we can derive more precise results in the asymptotic regime, which we will elaborate on in the next section.

\section{Asymptotic perspective and implications} \label{sec:asymp}

    This section considers the asymptotic performance of a poisoned model, namely the convergence rate of its prediction risk, with applications to answering questions proposed in Section~\ref{sec:intro}.
    The convergence rate of statistical risk is well understood for many common algorithms and data distributions. 
    Thus, Theorem~\ref{thm:finite} serves as a useful tool to study when a backdoor attack will be successful or not in the sense of Definition~\ref{def_success}. Next, we show how to utilize Theorem~\ref{thm:finite} to address the questions raised in Section \ref{sec:intro}. In particular, we study 
    \textbf{the optimal direction of a trigger and the minimum required magnitude of a trigger for a successful attack}. 

    
    \begin{assumption}[Ordinary convergence rate]\label{asmp:rate}
        We assume that $r_n^{\cl}(\hat{f}^{\cl}) \asymp r_n^{\poi}(\hat{f}^{\poi})$.
    \end{assumption}
    

    \begin{theorem}[Non-degenerate clean data distribution]\label{thm:smooth}
         Suppose that $r_n^{\cl}(\hat{f}^{\cl}) \asymp n^{-\gamma}$ for a positive constant $\gamma$, and $v_i(\cdot), i=0,1$ follows a multivariate Gaussian distribution with variance $\Sigma$. The eigenvalues and corresponding eigenvectors of $\Sigma$ are denoted as $\sigma_1 \geq \cdots \geq \sigma_p > 0$ and $\{u_j,j=1,\ldots,p\}$, respectively.
        Under Assumption \ref{asmp:rate}, for any fixed $\rho \in (0,1)$, we have
        \begin{enumerate}
            \item 
            Among all possible backdoor triggers $\eta$ satisfying $\norm{\eta}_2=s$,
            the attacker should choose $\eta^* = s \cdot u_p$ to minimize both the risks $r_n^{\bd}(\hat{f}^{\poi})$ and $ r_n^{\cl}(\hat{f}^{\poi})$;
            \item With the direction of $\eta$ same as $u_p$, there exist two constants $0<c_1<c_2$ such that (i) the backdoor attack is successful when $\norm{\eta}_2^2 \geq c_2 \ln n$ and (ii) the backdoor attack is unsuccessful when $\norm{\eta}_2^2 \leq c_1 \ln n$.
        \end{enumerate}
    \end{theorem}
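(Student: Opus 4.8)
The plan is to specialize the finite-sample bounds of Theorems~\ref{thm:finite} and~\ref{thm:finite_lb} to the Gaussian case, for which both the tail quantity $h_i^\eta$ and the minimal-density quantity $g_i^\eta$ admit closed-form estimates, and then read off the optimal direction (part 1) and the magnitude threshold (part 2). The first step is to compute these two direction-dependent quantities. Since $X\mid Y=i\sim N(m_i,\Sigma)$, the projection $(X-m_i)^\T\eta$ is $N(0,\eta^\T\Sigma\eta)$, so for $\norm{\eta}_2=s$ the definition gives $h_i^\eta(s/4)=2\Phi\bigl(-\tfrac{s^2/4}{\sqrt{\eta^\T\Sigma\eta}}\bigr)$, which by the standard Gaussian tail bound satisfies $h_i^\eta(s/4)\asymp\exp\{-s^4/(32\,\eta^\T\Sigma\eta)\}$. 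For the minimal density, writing $\nu_1(m_1-x)\propto\exp(-\tfrac12 x^\T\Sigma^{-1}x)$ yields $g_1^\eta(c)\asymp\exp\{-\tfrac12\max_{\norm{x-\eta}_2\le c}x^\T\Sigma^{-1}x\}$, whose leading exponent is $\tfrac12\eta^\T\Sigma^{-1}\eta$ up to an $O(s)$ correction from the radius-$c$ ball.

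For part 1, I would exploit that, at fixed magnitude $s$, both families of bounds are monotone in the direction of $\eta$. The upper bounds of Theorem~\ref{thm:finite} are increasing in $h_i^\eta$, hence in $\eta^\T\Sigma\eta$, and the Rayleigh quotient $\eta^\T\Sigma\eta$ over $\norm{\eta}_2=s$ is minimized at $\eta=s\,u_p$ (the smallest eigenvalue $\sigma_p$). Simultaneously the lower bounds of Theorem~\ref{thm:finite_lb} are increasing in $g_1^\eta$, whose exponent $\tfrac12\eta^\T\Sigma^{-1}\eta$ is \emph{maximized}, hence $g_1^\eta$ minimized, at $\eta=s\,u_p$, because $1/\sigma_p$ is the largest eigenvalue of $\Sigma^{-1}$. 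Thus the common exponential rate $\exp\{-\Theta(s^2/\sigma_j)\}$ governing both bounds decays fastest precisely when $\eta$ aligns with $u_p$, which establishes optimality of $\eta^*=s\,u_p$ for both $r_n^{\cl}$ and $r_n^{\bd}$.

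For the success half of part 2, I fix the direction $u_p$, so $\eta^\T\Sigma\eta=s^2\sigma_p$ with $s=\norm{\eta}_2$, and invoke Assumption~\ref{asmp:rate} to write $r_n^{\poi}(\hat{f}^{\poi})\asymp r_n^{\cl}(\hat{f}^{\cl})\asymp n^{-\gamma}$. The bias contribution to the upper bounds is $\asymp[h_i^\eta(s/4)]^\alpha\asymp\exp\{-\alpha s^2/(32\sigma_p)\}$, so choosing $c_2=32\sigma_p\gamma/\alpha$ and requiring $s^2\ge c_2\ln n$ forces this term to be $\lesssim n^{-\gamma}$. Both $r_n^{\cl}(\hat{f}^{\poi})$ and $r_n^{\bd}(\hat{f}^{\poi})$ are then $\lesssim n^{-\gamma}\asymp r_n^{\cl}(\hat{f}^{\cl})$, i.e.\ the attack succeeds; the side condition $\norm{\eta}_2\ge 4\cos(\eta,m_1-m_0)\norm{m_1-m_0}_2$ of Theorem~\ref{thm:finite} holds automatically since its right-hand side is bounded while $s\to\infty$.

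For the failure half I use Theorem~\ref{thm:finite_lb}: with $\eta=s\,u_p$ one has $\{g_1^\eta(c)\}^\beta\asymp\exp\{-\beta s^2/(2\sigma_p)\}$ up to a factor $\exp(-O(\sqrt{\ln n}))=n^{-o(1)}$ on the scale $s^2\asymp\ln n$, which does not affect the polynomial rate. Taking any $c_1<2\sigma_p\gamma\alpha/\beta^2$, the condition $s^2\le c_1\ln n$ makes $\{g_1^\eta(c)\}^\beta\gtrsim n^{-\beta c_1/(2\sigma_p)}$ with exponent strictly below $\gamma\alpha/\beta$, so this term strictly dominates the subtracted quantity $C_2\{r_n^{\poi}(\hat{f}^{\poi})\}^{\alpha/\beta}\asymp n^{-\gamma\alpha/\beta}$ and, since $\alpha/\beta\le1$, far exceeds $n^{-\gamma}$; hence $r_n^{\cl}(\hat{f}^{\poi})\gg n^{-\gamma}\asymp r_n^{\cl}(\hat{f}^{\cl})$ and the attack fails, with $c_1<c_2$ guaranteed by $\alpha\le1\le\beta$. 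The main obstacle is the treatment of $g_1^\eta(c)$: unlike $h_i^\eta$ it is a minimum of the Gaussian density over a ball rather than a clean tail integral, so I must verify that the $O(s)$ boundary correction to the exponent $\tfrac12\eta^\T\Sigma^{-1}\eta$ is asymptotically negligible at scale $s^2\asymp\ln n$, and that the domination of the negative term is strict, so the lower bound remains bounded away from zero in rate.
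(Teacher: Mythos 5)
Your proposal is correct and follows essentially the same route as the paper's proof: specialize Theorems~\ref{thm:finite} and~\ref{thm:finite_lb} to the Gaussian case, express $h_i^\eta$ and $g_1^\eta$ through the quadratic forms $\eta^\T\Sigma\eta$ and $\eta^\T\Sigma^{-1}\eta$, optimize the Rayleigh quotients at $\eta=s\,u_p$, and match the resulting exponentials against $n^{-\gamma}$ to extract the $\ln n$ threshold. If anything you are slightly more careful than the paper in two spots --- you keep the $\{r_n^{\poi}\}^{\alpha/\beta}$ exponent on the subtracted term when choosing $c_1$ (the paper silently drops it after specializing to $\alpha=1,\beta=2$), and you correctly argue that $g_1^\eta$ is minimized by \emph{maximizing} $\eta^\T\Sigma^{-1}\eta$ --- but neither difference changes the structure of the argument.
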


    \begin{theorem}[Degenerate clean data distribution]\label{thm:lowrank}
        Suppose there exists a direction $u$ such that the support of marginal distributions $v_i(\cdot)$, $i=0,1$ (see Definition~\ref{def_setup}) is a single point. Then, any backdoor attack with $\eta=s \cdot u$ and $s>0$ is successful. 
    \end{theorem}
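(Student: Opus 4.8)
The plan is to exploit the fact that degeneracy forces every triggered input to leave the support of the clean data, so that the clean and backdoor marginals become mutually singular and no ``interference'' between them occurs. Writing $u$ for the unit degenerate direction and $\eta = s\,u$, the hypothesis says that the projection $X^{\T}u$ is almost surely constant, say equal to $a$, under the clean distribution. Equivalently $(X-m_i)^{\T}u=0$ almost surely under $\nu_i$ for $i=0,1$, so that $h_i^{\eta}(r)=\P_{\nu_i}(\abs{(X-m_i)^{\T}\eta}\ge r\norm{\eta}_2)=0$ for every $r>0$. Hence the bias term $\max_i h_i^{\eta}(\norm{\eta}_2/4)$ in Theorem~\ref{thm:finite} vanishes for every $s>0$, and when both classes collapse to the common point $a$ the side condition $\norm{\eta}_2\ge 4\cos(\eta,m_1-m_0)\norm{m_1-m_0}_2$ reduces to $s\ge 4\,u^{\T}(m_1-m_0)=0$, which holds automatically. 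So one route is simply to invoke Theorem~\ref{thm:finite} with a zero bias term.

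I prefer, however, to give a direct self-contained argument, since it also sidesteps the side condition. First I would establish the key geometric step: $\mu^{\cl}_X$ is supported on the hyperplane $\{z:z^{\T}u=a\}$, whereas every triggered input satisfies $\hat{X}^{\T}u=X^{\T}u+s=a+s\neq a$, so $\mu^{\bd}_X$ is supported on the parallel hyperplane $\{z:z^{\T}u=a+s\}$. These hyperplanes are disjoint for all $s>0$, hence $\mu^{\cl}_X$ and $\mu^{\bd}_X$ are mutually singular. Applying the mixture-posterior formula to $\mu^{\poi}=(1-\rho)\mu^{\cl}+\rho\mu^{\bd}$, disjointness of supports then gives $f^{\poi}_*(x)=f^{\cl}_*(x)$ for $x$ in the clean support and $f^{\poi}_*(x)=f^{\bd}_*(x)=0$ for $x$ in the backdoor support.

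The remaining steps are bookkeeping. Splitting the expectation defining $r_n^{\poi}(\hat{f}^{\poi})$ according to $\mu^{\poi}_X=(1-\rho)\mu^{\cl}_X+\rho\mu^{\bd}_X$ and substituting the two pointwise identities for $f^{\poi}_*$ yields the exact decomposition $r_n^{\poi}(\hat{f}^{\poi})=(1-\rho)\,r_n^{\cl}(\hat{f}^{\poi})+\rho\,r_n^{\bd}(\hat{f}^{\poi})$. Since the loss is nonnegative, this gives $r_n^{\cl}(\hat{f}^{\poi})\le (1-\rho)^{-1}r_n^{\poi}(\hat{f}^{\poi})$ and $r_n^{\bd}(\hat{f}^{\poi})\le \rho^{-1}r_n^{\poi}(\hat{f}^{\poi})$, precisely the bounds of Theorem~\ref{thm:finite} with the bias removed. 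As $\rho\in(0,1)$ is fixed, invoking Assumption~\ref{asmp:rate} to replace $r_n^{\poi}(\hat{f}^{\poi})$ by a constant multiple of $r_n^{\cl}(\hat{f}^{\cl})$ gives $\max\{r_n^{\cl}(\hat{f}^{\poi}),r_n^{\bd}(\hat{f}^{\poi})\}\lesssim r_n^{\cl}(\hat{f}^{\cl})$, which is exactly success in the sense of Definition~\ref{def_success}.

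The hard part is the mutual-singularity step, and with it the precise reading of the degeneracy hypothesis. If the two class-conditional projections along $u$ collapse to a common value $a$, disjointness holds for every $s>0$ and the argument above is complete. If instead $\nu_0$ and $\nu_1$ degenerate at distinct values $a_0\neq a_1$, the single shift $s=\abs{a_1-a_0}$ can move the triggered copy of one class exactly onto the clean hyperplane of the other, destroying disjointness; there I would either read the hypothesis as requiring the whole clean marginal along $u$ to be a single point, or argue that exceptional magnitude separately. Everything else, namely the mixture-posterior identity and the risk decomposition, is routine.
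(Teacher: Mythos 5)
Your proposal is correct, and your first route is precisely the paper's proof: the paper observes that degeneracy along $u$ forces $h_i^{u}(r)=0$ and $g_i^{u}(r)=0$ for all $r>0$, plugs this into Theorem~\ref{thm:finite} to obtain $r_n^{\cl}(\hat{f}^{\poi})\lesssim r_n^{\poi}(\hat{f}^{\poi})$ and $r_n^{\bd}(\hat{f}^{\poi})\lesssim r_n^{\poi}(\hat{f}^{\poi})$, and closes with Assumption~\ref{asmp:rate}. Your preferred second route --- mutual singularity of $\mu^{\cl}_X$ and $\mu^{\bd}_X$, the pointwise identities $f^{\poi}_*=f^{\cl}_*$ on the clean hyperplane and $f^{\poi}_*=f^{\bd}_*=0$ on the shifted one, and the exact decomposition $r_n^{\poi}(\hat{f}^{\poi})=(1-\rho)\,r_n^{\cl}(\hat{f}^{\poi})+\rho\, r_n^{\bd}(\hat{f}^{\poi})$ --- is a genuinely self-contained variant, and it buys you something concrete: Theorem~\ref{thm:finite} is stated only under the side condition $\norm{\eta}_2\ge 4\cos(\eta,m_1-m_0)\norm{m_1-m_0}_2$, which the paper's proof of this theorem invokes without checking, whereas your direct argument never needs it. Your closing caveat is also well taken and sharper than the paper: if $\nu_0$ and $\nu_1$ degenerate at distinct values $a_0\neq a_1$ along $u$, then at $s=a_1-a_0$ (taking $a_1>a_0$) the shifted class-$0$ mass lands exactly on the clean class-$1$ hyperplane, the supports are no longer disjoint, and the ``any $s>0$'' claim fails as literally stated; the paper does not address this, and the intended reading is evidently that the clean marginal of $X$ along $u$ (not merely each class-conditional marginal separately) collapses to a single common point, under which both your routes and the paper's go through.
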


         \begin{figure}[tb]
             \centering
        \includegraphics[width = 0.5\linewidth]{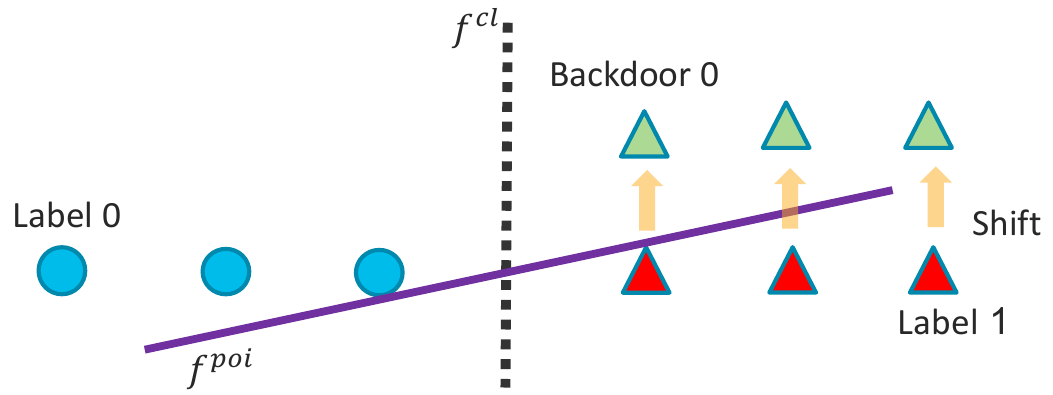}
        
             \caption{{\small Illustration of backdoor attacks with imperceptible perturbations: The original data lie on the horizontal axis. Thus, a backdoor attack with little vertical shift is sufficient for a successful backdoor.}}
             \label{fig:success}
            
         \end{figure}

        Theorems \ref{thm:smooth} and~\ref{thm:lowrank} show that the optimal choice of $\eta$ is along the direction with the smallest variance -- the direction that the clean data density decays the fastest.  
        Those results also characterize the minimum required $\ell_2$-norm of $\eta$ for a successful attack. Specifically, for inputs that degenerate in some direction, Theorem~\ref{thm:lowrank} shows an arbitrarily small norm of $\eta$ can already qualify for a successful backdoor. In contrast, Theorem~\ref{thm:smooth} shows that for data inputs following a non-degenerate Gaussian distribution, the magnitude of $\eta$ has to be at least at the order of $\sqrt{\ln(n)}$ to have a successful backdoor attack. 
        An $\eta$ slower than that will cause significant performance degradation. 

        Theorem~\ref{thm:lowrank} theoretically explains the success of human-imperceptible backdoor attacks such as those developed in~\citep{li2021invisible}. 
        The condition of Theorem~\ref{thm:lowrank} is satisfied if the data distribution has a low-rank embedding in $\real^d$. This is particularly common in high-dimensional data problems such as image classification~\citep{pless2009survey}. For such degenerated clean data, 
        Theorem~\ref{thm:lowrank} implies that poisoning backdoor attack with an arbitrarily small magnitude and certain direction of trigger can succeed. 
        As exemplified in Figure~\ref{fig:success}, when clean data degenerate in some directions, we can exploit this unused space to craft backdoor data that are well-separated from clean data. Consequently, learning models with sufficient expressiveness will perform well on both clean and backdoor data. 
        %
            It is worth mentioning that the Gaussian assumption in Theorem~\ref{thm:smooth} is non-critical. The proof can be emulated to derive results for any other distribution satisfying Assumption~\ref{asmp:dist}. We use it only to show how to calculate the minimum required magnitude of $\eta$ for a successful attack. 

    \begin{remark}[Vanishing backdoor data ratio]
    Theorem \ref{thm:smooth} and \ref{thm:lowrank} suggest that when backdoor data ratio $\rho$ is bounded away from zero, there exist successful attacks with carefully chosen triggers. 
    However, the necessary condition on $\rho$ remains unclear. In particular, we are interested in when will a backdoor attack succeed with a vanishing $\rho$. We conjecture that whether a vanishing $\rho$ can lead to a successful attack depends on both clean data distribution and the learning algorithm used to build the model. For example, when the learner adopts $k$-nearest neighbors, it may depend on the relative order of $k$ and $\rho$.
    We leave the finer-grid analysis of $\rho$ as future work.
    \end{remark}

    \begin{remark}[Discussion on the technical assumption]
    Recall that $r_n^{\poi}(\hat{f}^{\poi})$ is the prediction risk of the poisoned model on the poisoned data distribution. This is actually equivalent to the ordinary statistical risk of clean model on clean data, with $\mu^{\poi}$ considered as the clean data. Moreover, since $\mu^{\cl}$ and $\mu^{\poi}$ often fall in the same function class, such as the H\"{o}lder class, Assumption~\ref{asmp:rate} will hold almost surely~\citep{barron1998information}. For example, when $f^{\cl}_*$ is a Lipschitz function, 
    the convergence rate is often at the order of $n^{-2/(p+2)}$ for $\ell_2$ loss and non-parametric estimators, including $k$-nearest neighbors and kernel-based estimators.
    \end{remark}

\section{Extension to generative models}\label{sec:gen}

    A generative model is traditionally trained to mimic the joint distribution $(X, Y)$, where $X \in \real^p$ is the input and $Y \in \real^q$ is the output. In other words, it models the conditional distribution of $Y$ given a certain input $X$, denoted as $f_X$. The loss function is now defined as
    $ 
        \ell_p(f_X,g_X) = \int_y \ell(f_X(y), g_X(y)) p(dy),
    $ 
    where $p(\cdot)$ is a given distribution over the event space of $Y$.
    The corresponding backdoor attack is adding a trigger $\eta$ to clean data $X$ and pairing it with a target output $Y'$ sampled from the target backdoor distribution $\mu^{\bd}$. The other settings such as the threat model and goals are the same as in Section~\ref{sec:pre}.
    Analogous to Theorem~\ref{thm:lowrank}, for generative models, we prove that the attack adding a trigger to the degenerate direction of the clean data distribution will be successful.
    
    \begin{theorem}[Generative model with degenerated distribution]\label{thm:gen}
        Suppose there exists a direction $u$ such that the support of marginal distributions of $\mu^{\cl}_X$ is a single point. Then, any backdoor attack with $\eta=s \cdot u$ and $s>0$ is successful. 
    \end{theorem}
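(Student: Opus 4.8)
The plan is to exploit the degeneracy of the clean input distribution to show that clean and backdoor inputs occupy disjoint regions of $\real^p$, so that the poisoned problem decouples exactly into a clean problem and a backdoor problem; the whole claim then reduces to the ordinary convergence-rate assumption. First I would fix the degenerate direction $u$ and observe that, since the marginal of $\mu^{\cl}_X$ along $u$ is a single point, every clean input satisfies $u^\T X = a$ almost surely for a constant $a$. A backdoor input is $X+\eta = X + s u$ with $s>0$, hence $u^\T(X+su) = a+s \neq a$. Thus clean inputs live on the hyperplane $\{x: u^\T x = a\}$ and backdoor inputs on the parallel hyperplane $\{x: u^\T x = a+s\}$, and these two supports are disjoint. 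This is precisely the mechanism that, in the discriminative proof of Theorem~\ref{thm:lowrank}, forces $(X-m_i)^\T\eta = 0$ almost surely and hence a vanishing bias term $h_i^\eta$ in Theorem~\ref{thm:finite}.

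The second step is to turn this geometric separation into a statement about the poisoned conditional distribution. Because $\mu^{\poi}_X = (1-\rho)\mu^{\cl}_X + \rho\mu^{\bd}_X$ is a mixture whose two components have disjoint supports, at any point $x$ in the clean support the backdoor component contributes no mass, so the poisoned conditional distribution of $Y$ given $X=x$ coincides with the clean target conditional; symmetrically, on the backdoor support it coincides with the target backdoor conditional. Integrating the generative loss $\ell_p$ against this target therefore splits exactly as a convex combination,
\begin{equation*}
r_n^{\poi}(\hat{f}^{\poi}) = (1-\rho)\, r_n^{\cl}(\hat{f}^{\poi}) + \rho\, r_n^{\bd}(\hat{f}^{\poi}).
\end{equation*}
Since both terms on the right are nonnegative, this immediately yields $r_n^{\cl}(\hat{f}^{\poi}) \leq (1-\rho)^{-1} r_n^{\poi}(\hat{f}^{\poi})$ and $r_n^{\bd}(\hat{f}^{\poi}) \leq \rho^{-1} r_n^{\poi}(\hat{f}^{\poi})$, i.e.\ the generative analog of the Theorem~\ref{thm:finite} upper bounds with zero bias.

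The final step is to invoke the ordinary-rate assumption. Carrying Assumption~\ref{asmp:rate} into the generative setting gives $r_n^{\poi}(\hat{f}^{\poi}) \asymp r_n^{\cl}(\hat{f}^{\cl})$, and since $\rho \in (0,1)$ is a fixed constant the factors $(1-\rho)^{-1}$ and $\rho^{-1}$ are absorbed into $\lesssim$. Combining with the bounds above gives $\max\{r_n^{\cl}(\hat{f}^{\poi}), r_n^{\bd}(\hat{f}^{\poi})\} \lesssim r_n^{\cl}(\hat{f}^{\cl})$, which is exactly Definition~\ref{def_success} of a successful attack for every $s>0$.

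The step I expect to be the main obstacle is the second one: making rigorous, in the generative setting where the target $f_X$ is a conditional \emph{distribution} over $\real^q$ rather than a scalar probability, the claim that the mixture's regular conditional distribution restricts correctly to each hyperplane and that the integral-form loss $\ell_p$ decomposes additively across the two disjoint supports. This requires a careful measure-theoretic treatment of the conditional distribution of a mixture with disjoint component supports; once the conditionals are pinned down on each hyperplane, the exact risk decomposition and the remaining rate argument are routine.
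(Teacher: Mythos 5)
Your proposal is correct and follows essentially the same route as the paper: both arguments rest on the observation that the degenerate direction makes the clean and backdoor supports disjoint, so that $f^{\poi}_{*x}$ coincides with $f^{\cl}_{*x}$ on the clean support (and with the backdoor target on the shifted support), after which the mixture inequality $\mu^{\poi}_X \geq (1-\rho)\mu^{\cl}_X$ and Assumption~\ref{asmp:rate} give $\max\{r_n^{\cl}(\hat{f}^{\poi}), r_n^{\bd}(\hat{f}^{\poi})\} \lesssim r_n^{\cl}(\hat{f}^{\cl})$. Your exact convex decomposition of $r_n^{\poi}(\hat{f}^{\poi})$ is a slightly cleaner packaging than the paper's H\"{o}lder-continuity step with a vanishing bias term, but it is the same underlying argument.
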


\section{Experimental Studies} \label{sec:exp}

\subsection{Synthetic Data}

    We conduct a simulated data experiment to demonstrate the developed theory. Following the setting in Theorem~\ref{thm:smooth},  
    we consider two-dimensional Gaussian distributions with $m_1 = (-3, 0)$, $m_0=(3, 0)$, $\Sigma$ is a diagonal matrix with $\Sigma_{11}=3$ and $\Sigma_{22}=1/2$, $\P_{\mu}(Y=1)=0.5$, training sample size $n=100$, and backdoor data ratio $\rho=0.2$. The $\ell_2$-norm of the backdoor trigger $\eta$ is chosen from $\{1,3,5\}$, while the degree of angle with $m_1-m_0$ is chosen from $\{0, 45, 90, 135, 180\}$. We visualized two poisoned datasets in Figure~\ref{fig:visual}. 
    For model training and evaluation, kernel smoothing~\citep{gyorfi2002distribution} is used as the learning algorithm:
        for a dataset $D_n=\{(X_i,Y_i),i=1,\dots, n\}$, 
        \begin{align*}
            \hat{f}(x) =   \biggl( \sum_{i=1}^n K_{h_n}(X_i-x) \biggr)^{-1} \biggl(\sum_{i=1}^n  Y_i \cdot K_{h_n}(X_i-x)\biggr),
        \end{align*}
        where $h_n \in \real^+$ is the bandwidth, $K_{h_n}(x) = K((X_i-x)/h_n)$ with $K(\cdot)$ being a Gaussian kernel.
The bandwidth is chosen by the five-fold cross-validation~\citep{DingOverview}. We evaluate the model performance on $1000$ test inputs using the zero-one loss. In particular, three quantities are calculated: the test error of the poisoned model on clean data inputs ($R_n^{\poi}$), the test error of the poisoned model on backdoor data inputs ($R_n^{\bd}$), and test error of the clean model on clean data inputs ($R_n^{\cl}$). All experiments are independently replicated $20$ times. The results are summarized in Figure~\ref{fig:exp}.

    \begin{figure}
        \centering
        \begin{minipage}{.55\textwidth}
            \centering
    \includegraphics[width = \columnwidth]{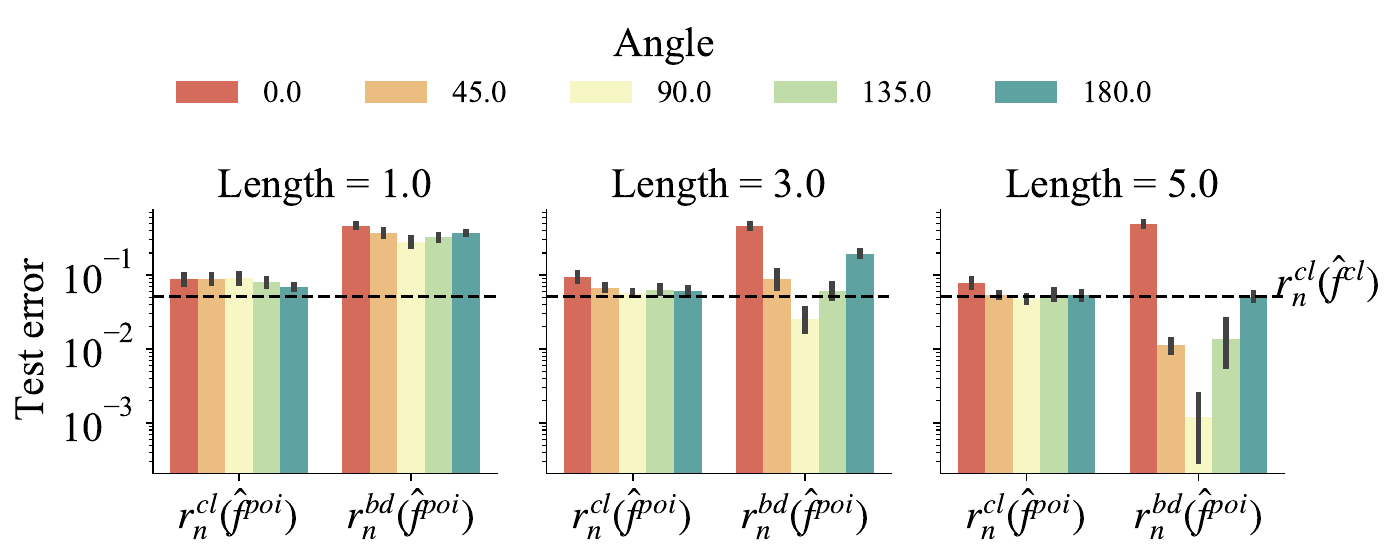}
 
             \caption{{\small The test errors of the poisoned model on both clean inputs (``cl'') and backdoor-triggered inputs (``bd'') under different angles (between $\eta$ and $m_1-m_0$) and $\ell_2$-norms of the backdoor trigger $\eta$. The dashed line denotes the baseline test error of the clean model on clean inputs. The vertical bar is the 95\% confidence interval.}}
             \label{fig:exp}
        \end{minipage}
        \hfill
        \begin{minipage}{.42\textwidth}
                    \centering
            \includegraphics[width = \columnwidth]{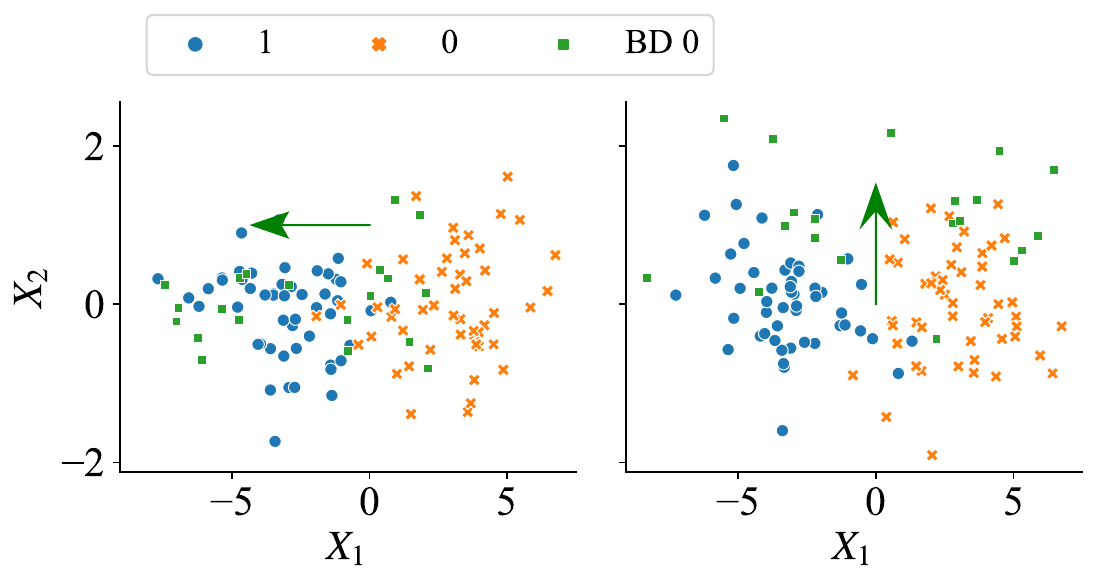}
            
            \caption{{\small Visualization of the poisoned datasets.
            Left: the backdoor trigger $\eta$ has an $\ell_2$-norm of  three and a $0^\circ$ angle with $m_1-m_0$; Right: $\eta$ has an $\ell_2$-norm of one and a $90^\circ$ angle with $m_1-m_0$. `BD 0' means backdoor data with targeted label zero, and the arrow represents the trigger's direction.}}
            \label{fig:visual}
        \end{minipage}
    \end{figure}
    


     Figure~\ref{fig:exp} shows: (1) the increase of length leads to the decrease of both $R_n^{\poi}$ and $R_n^{\bd}$, (2) $R_n^{\bd}$ varies significantly for different angles, and is the smallest when $\eta$ is orthogonal to $m_1-m_0$, which is exactly the direction of the eigenvector of the smallest eigenvalue of variance matrix $\Sigma$, (3) $R_n^{\poi}$ is relatively stable in this experiment, though a small angle, or a large $\cos(\eta, m_1-m_0)$, results in a large $R_n^{\poi}$. Overall, the trend in the result is consistent with our developed theoretical understanding.

\subsection{Backdoor Attacks with Computer Vision Benchmarks}

\subsubsection{Backdoor Attacks in Discriminative (Classification) Models}

\textbf{First implication: On the magnitude of the backdoor triggers }
In this experiment, our objective is to empirically validate the hypothesis that a larger trigger size, measured in terms of magnitude, results in a more impactful attack. We conducted BadNets~\citep{gu2017badnets} using a square patch positioned at the lower-right corner of the images as the trigger on the MNIST~\citep{lecun2010mnistnew} and CIFAR10~\citep{krizhevsky2009learning} datasets, utilizing both LeNet~\citep{lecun2015lenet} and ResNet~\citep{he2016deep} models. The results are summarized in Table~\ref{table: magnitude} below. As the magnitude of the backdoor trigger, as represented by the pixel values, increased, we observed a corresponding improvement in backdoor model accuracy, in line with our theoretical predictions.

\begin{table}[tb]
\caption{{\small Backdoor Performance of ResNet across varying magnitudes of backdoor triggers (pixel values) in the MNIST and CIFAR10 datasets.}
}
\begin{center}
\scalebox{0.75}{
\begin{tabular}{ccccccccccc}
    \toprule 
    \multicolumn{1}{c}{} &
    \multicolumn{5}{c}{{MNIST}} &
      \multicolumn{5}{c}{{CIFAR10}} 
    \\
\cmidrule(lr){2-6} \cmidrule(lr){7-11} 
    \multicolumn{1}{c}{Pixel Value $[0,255] \rightarrow$} &
\multicolumn{1}{c}{$1$} & \multicolumn{1}{c}{$3$} & 
    \multicolumn{1}{c}{$10$} &
    \multicolumn{1}{c}{$15$} &
\multicolumn{1}{c}{$30$} & \multicolumn{1}{c}{$1$} & \multicolumn{1}{c}{$3$} & 
    \multicolumn{1}{c}{$10$} &
    \multicolumn{1}{c}{$15$} &
\multicolumn{1}{c}{$30$}
 \\
    \midrule  
    \multicolumn{1}{c}{Clean Accuracy} & $0.82$ & $0.89$ & $0.98$ & $0.99$  & $0.99$& $0.82$ & $0.81$ & $0.87$ & $0.90$ & $0.93$ \\
    \multicolumn{1}{c}{Backdoor Accuracy} & $0.72$ & $0.91$ & $0.97$ & $0.97$  & $0.99$& $0.51$ & $0.62$ & $0.80$ & $0.87$ & $0.99$ \\
    \bottomrule
\end{tabular}
}
\end{center}
\label{table: magnitude}

\end{table}

\textbf{Second Implication: On the optimal direction(s) of backdoor triggers }
In this experiment, we show that attack efficacy increases as the separation between backdoor and clean data distributions grows. We tested four backdoor attacks—BadNets, WaNet~\citep{nguyen2021wanet}, Adaptive Patch~\citep{qi2023revisiting}, and Adaptive Blend~\citep{qi2023revisiting} on the CIFAR10 dataset using the ResNet-18 model. We visually represent the relative change between clean and backdoor data for each attack in Figure~\ref{fig: exp_directions}, calculated as the absolute difference between clean and backdoor data at the $i$th dimension divided by the standard deviation of the same dimension.
Our results show that WaNet, Adaptive Patch, and Adaptive Blend attacks produce a more significant relative change in dimensions with low variance. This aligns with our theory, confirming the effectiveness of these methods compared to BadNets.

\begin{figure}[tb]
    \centering
\includegraphics[width = 1\linewidth]{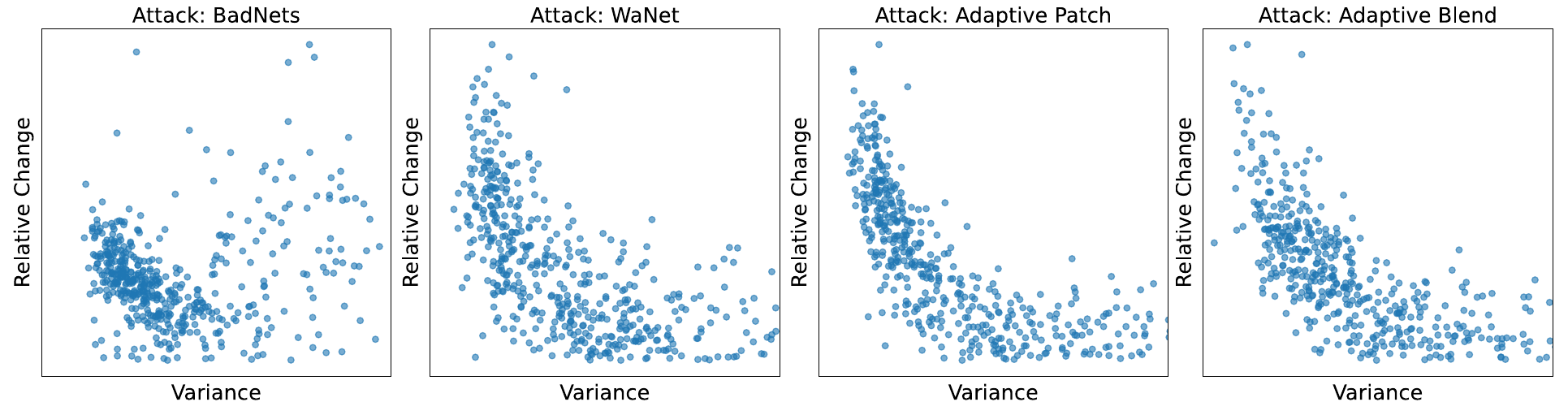}

    \caption{{\small In each plot, the $x$-axis corresponds to the variance of each dimension of backdoor data while the $y$-axis represents the relative change along that dimension. This relative change is the absolute difference between clean and backdoor data at the $i$th dimension divided by the standard deviation of that dimension.}}
    \label{fig: exp_directions}
\end{figure}

\subsubsection{Backdoor Attacks in Generative (Diffusion) Models}

In this experiment, we validated generative model theories using a class-conditioned Denoising Diffusion Probabilistic Model (DDPM)~\citep{ho2020denoising} to generate MNIST-like images. In this conditional setup, the input space represents class labels, while the output space contains generated images. In the context of the backdoor scenario, a new class labeled `10' was introduced, where the target images were modified MNIST `7' images adding a square patch located in the lower-right corner. The outcomes, visually depicted in Figure~\ref{fig:enter-label}, show the backdoored model's high quality in generating `7' images with the specified square patch. Quantitatively, following~\citep{chou2023backdoor}, we calculated that the mean squared error (MSE) between the generated images and their intended target images consistently registers below the critical threshold of $0.01$. Furthermore, the MSE between the clean training images and the images produced by the backdoored DDPM with the original input class labels is also below $0.01$, indicating the backdoor attack's success. The above is consistent with our theoretical expectations from Theorem~\ref{thm:gen}.

\begin{figure}
    \centering
    \includegraphics[width = 0.9\linewidth]{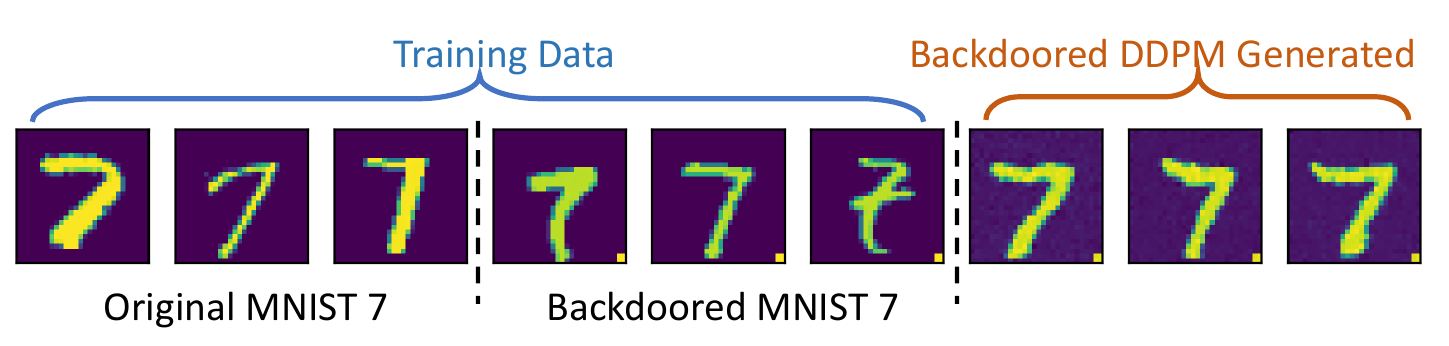}
   
    \caption{{\small Illustrations of original MNIST '7'  images (leftmost images), backdoored versions with a  square patch (middle images), and images generated from a backdoored DDPM (rightmost figures).}}
    \label{fig:enter-label}
\end{figure}

\section{Conclusion}\label{sec:conclusion}

This paper characterizes the prediction performance of a backdoor attack on both clean and backdoored data. Our theory presents evidence that the key to a successful backdoor attack is the separation of backdoor data from clean data in a probabilistic sense. This insight explains why human-imperceptible attacks can succeed and characterizes effective attacks. The results are demonstrated through the calculation of the optimal strategy for adding triggers, including magnitude and direction. The Appendix contains the proof of all the theoretical results.
Future research includes examining the performance of backdoor attacks on more diverse data distributions and vanishing backdoor data ratios $\rho$, measuring the magnitude of the backdoor trigger besides the $\ell_2$-norm, and investigating other types of backdoor attacks with similar characteristics. 

\section*{Acknowledgement}
The work of Ganghua Wang and Jie Ding was supported in part by the Office of Naval Research under grant number N00014-21-1-2590. The work of Ganghua Wang, Xun Xian, Xuan Bi, and Mingyi Hong was supported in part by a sponsored research award by Cisco Research.

\balance
\bibliographystyle{iclr2024_conference}
\bibliography{ref,refs,ref_temp}

\begin{thebibliography}{45}
\providecommand{\natexlab}[1]{#1}
\providecommand{\url}[1]{\texttt{#1}}
\expandafter\ifx\csname urlstyle\endcsname\relax
  \providecommand{\doi}[1]{doi: #1}\else
  \providecommand{\doi}{doi: \begingroup \urlstyle{rm}\Url}\fi

\bibitem[Bagdasaryan \& Shmatikov(2021)Bagdasaryan and
  Shmatikov]{bagdasaryan2021blind}
Eugene Bagdasaryan and Vitaly Shmatikov.
\newblock Blind backdoors in deep learning models.
\newblock In \emph{30th USENIX Security Symposium (USENIX Security 21)}, pp.\
  1505--1521, 2021.

\bibitem[Bagdasaryan et~al.(2020)Bagdasaryan, Veit, Hua, Estrin, and
  Shmatikov]{bagdasaryan2020backdoor}
Eugene Bagdasaryan, Andreas Veit, Yiqing Hua, Deborah Estrin, and Vitaly
  Shmatikov.
\newblock How to backdoor federated learning.
\newblock In \emph{International Conference on Artificial Intelligence and
  Statistics}, pp.\  2938--2948. PMLR, 2020.

\bibitem[Barni et~al.(2019)Barni, Kallas, and Tondi]{barni2019new}
Mauro Barni, Kassem Kallas, and Benedetta Tondi.
\newblock A new backdoor attack in cnns by training set corruption without
  label poisoning.
\newblock In \emph{2019 IEEE International Conference on Image Processing
  (ICIP)}, pp.\  101--105. IEEE, 2019.

\bibitem[Barron \& Hengartner(1998)Barron and
  Hengartner]{barron1998information}
Andrew Barron and Nicolas Hengartner.
\newblock Information theory and superefficiency.
\newblock \emph{The Annals of Statistics}, 26\penalty0 (5):\penalty0
  1800--1825, 1998.

\bibitem[Chen et~al.(2019)Chen, Carvalho, Baracaldo, Ludwig, Edwards, Lee,
  Molloy, and Srivastava]{chen2018detecting}
Bryant Chen, Wilka Carvalho, Nathalie Baracaldo, Heiko Ludwig, Benjamin
  Edwards, Taesung Lee, Ian Molloy, and Biplav Srivastava.
\newblock Detecting backdoor attacks on deep neural networks by activation
  clustering.
\newblock In \emph{SafeAI@AAAI, ser. CEUR Workshop Proceedings}, volume 2301,
  2019.

\bibitem[Chen et~al.(2017)Chen, Liu, Li, Lu, and Song]{chen2017targeted}
Xinyun Chen, Chang Liu, Bo~Li, Kimberly Lu, and Dawn Song.
\newblock Targeted backdoor attacks on deep learning systems using data
  poisoning.
\newblock \emph{arXiv preprint arXiv:1712.05526}, 2017.

\bibitem[Chou et~al.(2020)Chou, Tramer, and Pellegrino]{chou2020sentinet}
Edward Chou, Florian Tramer, and Giancarlo Pellegrino.
\newblock Sentinet: Detecting localized universal attacks against deep learning
  systems.
\newblock In \emph{2020 IEEE Security and Privacy Workshops (SPW)}, pp.\
  48--54. IEEE, 2020.

\bibitem[Chou et~al.(2023)Chou, Chen, and Ho]{chou2023backdoor}
Sheng-Yen Chou, Pin-Yu Chen, and Tsung-Yi Ho.
\newblock How to backdoor diffusion models?
\newblock In \emph{Proceedings of the IEEE/CVF Conference on Computer Vision
  and Pattern Recognition}, pp.\  4015--4024, 2023.

\bibitem[Cui et~al.(2022)Cui, Yuan, He, Chen, Liu, and Sun]{cui2022unified}
Ganqu Cui, Lifan Yuan, Bingxiang He, Yangyi Chen, Zhiyuan Liu, and Maosong Sun.
\newblock A unified evaluation of textual backdoor learning: Frameworks and
  benchmarks.
\newblock \emph{arXiv preprint arXiv:2206.08514}, 2022.

\bibitem[Devroye et~al.(2013)Devroye, Gy{\"o}rfi, and
  Lugosi]{devroye2013probabilistic}
Luc Devroye, L{\'a}szl{\'o} Gy{\"o}rfi, and G{\'a}bor Lugosi.
\newblock \emph{A probabilistic theory of pattern recognition}, volume~31.
\newblock Springer Science \& Business Media, 2013.

\bibitem[Ding et~al.(2018)Ding, Tarokh, and Yang]{DingOverview}
Jie Ding, Vahid Tarokh, and Yuhong Yang.
\newblock Model selection techniques: An overview.
\newblock \emph{IEEE Signal Process. Mag.}, 35\penalty0 (6):\penalty0 16--34,
  2018.

\bibitem[Doan et~al.(2021{\natexlab{a}})Doan, Lao, and Li]{doan2021backdoor}
Khoa Doan, Yingjie Lao, and Ping Li.
\newblock Backdoor attack with imperceptible input and latent modification.
\newblock \emph{Advances in Neural Information Processing Systems}, 34,
  2021{\natexlab{a}}.

\bibitem[Doan et~al.(2021{\natexlab{b}})Doan, Lao, Zhao, and Li]{doan2021lira}
Khoa Doan, Yingjie Lao, Weijie Zhao, and Ping Li.
\newblock Lira: Learnable, imperceptible and robust backdoor attacks.
\newblock In \emph{Proceedings of the IEEE/CVF International Conference on
  Computer Vision}, pp.\  11966--11976, 2021{\natexlab{b}}.

\bibitem[Gao et~al.(2019)Gao, Xu, Wang, Chen, Ranasinghe, and
  Nepal]{gao2019strip}
Yansong Gao, Change Xu, Derui Wang, Shiping Chen, Damith~C Ranasinghe, and
  Surya Nepal.
\newblock Strip: A defence against trojan attacks on deep neural networks.
\newblock In \emph{Proceedings of the 35th Annual Computer Security
  Applications Conference}, pp.\  113--125, 2019.

\bibitem[Goldblum et~al.(2022)Goldblum, Tsipras, Xie, Chen, Schwarzschild,
  Song, M{\k{a}}dry, Li, and Goldstein]{goldblum2022dataset}
Micah Goldblum, Dimitris Tsipras, Chulin Xie, Xinyun Chen, Avi Schwarzschild,
  Dawn Song, Aleksander M{\k{a}}dry, Bo~Li, and Tom Goldstein.
\newblock Dataset security for machine learning: Data poisoning, backdoor
  attacks, and defenses.
\newblock \emph{IEEE Transactions on Pattern Analysis and Machine
  Intelligence}, 45\penalty0 (2):\penalty0 1563--1580, 2022.

\bibitem[Grigorescu et~al.(2020)Grigorescu, Trasnea, Cocias, and
  Macesanu]{grigorescu2020survey}
Sorin Grigorescu, Bogdan Trasnea, Tiberiu Cocias, and Gigel Macesanu.
\newblock A survey of deep learning techniques for autonomous driving.
\newblock \emph{Journal of Field Robotics}, 37\penalty0 (3):\penalty0 362--386,
  2020.

\bibitem[Gu et~al.(2017)Gu, Dolan-Gavitt, and Garg]{gu2017badnets}
Tianyu Gu, Brendan Dolan-Gavitt, and Siddharth Garg.
\newblock Badnets: Identifying vulnerabilities in the machine learning model
  supply chain.
\newblock \emph{arXiv preprint arXiv:1708.06733}, 2017.

\bibitem[Gy{\"o}rfi et~al.(2002)Gy{\"o}rfi, Kohler, Krzyzak, Walk,
  et~al.]{gyorfi2002distribution}
L{\'a}szl{\'o} Gy{\"o}rfi, Michael Kohler, Adam Krzyzak, Harro Walk, et~al.
\newblock \emph{A distribution-free theory of nonparametric regression},
  volume~1.
\newblock Springer, 2002.

\bibitem[Hammoudeh \& Lowd(2022)Hammoudeh and Lowd]{hammoudeh2022identifying}
Zayd Hammoudeh and Daniel Lowd.
\newblock Identifying a training-set attack's target using renormalized
  influence estimation.
\newblock In \emph{Proceedings of the 2022 ACM SIGSAC Conference on Computer
  and Communications Security}, pp.\  1367--1381, 2022.

\bibitem[Hayase et~al.(2021)Hayase, Kong, Somani, and Oh]{hayase2021spectre}
Jonathan Hayase, Weihao Kong, Raghav Somani, and Sewoong Oh.
\newblock Spectre: defending against backdoor attacks using robust statistics.
\newblock In \emph{Proceedings of the 38th International Conference on Machine
  Learning}, 2021.

\bibitem[He et~al.(2016)He, Zhang, Ren, and Sun]{he2016deep}
Kaiming He, Xiangyu Zhang, Shaoqing Ren, and Jian Sun.
\newblock Deep residual learning for image recognition.
\newblock In \emph{Proceedings of the IEEE conference on computer vision and
  pattern recognition}, pp.\  770--778, 2016.

\bibitem[Ho et~al.(2020)Ho, Jain, and Abbeel]{ho2020denoising}
Jonathan Ho, Ajay Jain, and Pieter Abbeel.
\newblock Denoising diffusion probabilistic models.
\newblock \emph{Advances in neural information processing systems},
  33:\penalty0 6840--6851, 2020.

\bibitem[Krizhevsky et~al.(2009)Krizhevsky, Hinton,
  et~al.]{krizhevsky2009learning}
Alex Krizhevsky, Geoffrey Hinton, et~al.
\newblock Learning multiple layers of features from tiny images.
\newblock 2009.

\bibitem[LeCun et~al.(2010)LeCun, Cortes, and Burges]{lecun2010mnistnew}
Yann LeCun, Corinna Cortes, and CJ~Burges.
\newblock Mnist handwritten digit database.
\newblock \emph{ATT Labs [Online]. Available:
  http://yann.lecun.com/exdb/mnist}, 2, 2010.

\bibitem[LeCun et~al.(2015)]{lecun2015lenet}
Yann LeCun et~al.
\newblock Lenet-5, convolutional neural networks.
\newblock \emph{URL: http://yann. lecun. com/exdb/lenet}, 20\penalty0
  (5):\penalty0 14, 2015.

\bibitem[Li et~al.(2020)Li, Wu, Jiang, Li, and Xia]{li2020backdoor}
Yiming Li, Baoyuan Wu, Yong Jiang, Zhifeng Li, and Shu-Tao Xia.
\newblock Backdoor learning: A survey.
\newblock \emph{arXiv preprint arXiv:2007.08745}, 2020.

\bibitem[Li et~al.(2021)Li, Li, Wu, Li, He, and Lyu]{li2021invisible}
Yuezun Li, Yiming Li, Baoyuan Wu, Longkang Li, Ran He, and Siwei Lyu.
\newblock Invisible backdoor attack with sample-specific triggers.
\newblock In \emph{Proceedings of the IEEE/CVF International Conference on
  Computer Vision}, pp.\  16463--16472, 2021.

\bibitem[Liu et~al.(2018)Liu, Dolan-Gavitt, and Garg]{liu2018fine}
Kang Liu, Brendan Dolan-Gavitt, and Siddharth Garg.
\newblock Fine-pruning: Defending against backdooring attacks on deep neural
  networks.
\newblock In \emph{Research in Attacks, Intrusions, and Defenses: 21st
  International Symposium, RAID 2018, Heraklion, Crete, Greece, September
  10-12, 2018, Proceedings 21}, pp.\  273--294. Springer, 2018.

\bibitem[Liu et~al.(2017)Liu, Ma, Aafer, Lee, Zhai, Wang, and
  Zhang]{liu2017trojaning}
Yingqi Liu, Shiqing Ma, Yousra Aafer, Wen-Chuan Lee, Juan Zhai, Weihang Wang,
  and Xiangyu Zhang.
\newblock Trojaning attack on neural networks.
\newblock 2017.

\bibitem[Manoj \& Blum(2021)Manoj and Blum]{manoj2021excess}
Naren Manoj and Avrim Blum.
\newblock Excess capacity and backdoor poisoning.
\newblock \emph{Advances in Neural Information Processing Systems},
  34:\penalty0 20373--20384, 2021.

\bibitem[Nguyen \& Tran(2020)Nguyen and Tran]{nguyen2020input}
Tuan~Anh Nguyen and Anh Tran.
\newblock Input-aware dynamic backdoor attack.
\newblock \emph{Advances in Neural Information Processing Systems},
  33:\penalty0 3454--3464, 2020.

\bibitem[Nguyen \& Tran(2021)Nguyen and Tran]{nguyen2021wanet}
Tuan~Anh Nguyen and Anh~Tuan Tran.
\newblock Wanet - imperceptible warping-based backdoor attack.
\newblock In \emph{International Conference on Learning Representations}, 2021.
\newblock URL \url{https://openreview.net/forum?id=eEn8KTtJOx}.

\bibitem[Oh et~al.(2020)Oh, Hagiwara, Raghavendra, Yuvaraj, Arunkumar,
  Murugappan, and Acharya]{oh2020deep}
Shu~Lih Oh, Yuki Hagiwara, U~Raghavendra, Rajamanickam Yuvaraj, N~Arunkumar,
  M~Murugappan, and U~Rajendra Acharya.
\newblock A deep learning approach for parkinson’s disease diagnosis from eeg
  signals.
\newblock \emph{Neural Computing and Applications}, 32\penalty0 (15):\penalty0
  10927--10933, 2020.

\bibitem[Pless \& Souvenir(2009)Pless and Souvenir]{pless2009survey}
Robert Pless and Richard Souvenir.
\newblock A survey of manifold learning for images.
\newblock \emph{IPSJ Transactions on Computer Vision and Applications},
  1:\penalty0 83--94, 2009.

\bibitem[Qi et~al.(2023)Qi, Xie, Li, Mahloujifar, and Mittal]{qi2023revisiting}
Xiangyu Qi, Tinghao Xie, Yiming Li, Saeed Mahloujifar, and Prateek Mittal.
\newblock Revisiting the assumption of latent separability for backdoor
  defenses.
\newblock In \emph{The eleventh international conference on learning
  representations}, 2023.

\bibitem[Souri et~al.(2022)Souri, Fowl, Chellappa, Goldblum, and
  Goldstein]{souri2022sleeper}
Hossein Souri, Liam Fowl, Rama Chellappa, Micah Goldblum, and Tom Goldstein.
\newblock Sleeper agent: Scalable hidden trigger backdoors for neural networks
  trained from scratch.
\newblock \emph{Advances in Neural Information Processing Systems},
  35:\penalty0 19165--19178, 2022.

\bibitem[Stallkamp et~al.(2012)Stallkamp, Schlipsing, Salmen, and
  Igel]{stallkamp2012man}
Johannes Stallkamp, Marc Schlipsing, Jan Salmen, and Christian Igel.
\newblock Man vs. computer: Benchmarking machine learning algorithms for
  traffic sign recognition.
\newblock \emph{Neural networks}, 32:\penalty0 323--332, 2012.

\bibitem[Tang et~al.(2021)Tang, Wang, Tang, and Zhang]{tang2021demon}
Di~Tang, XiaoFeng Wang, Haixu Tang, and Kehuan Zhang.
\newblock Demon in the variant: Statistical analysis of dnns for robust
  backdoor contamination detection.
\newblock In \emph{USENIX Security Symposium}, pp.\  1541--1558, 2021.

\bibitem[Tran et~al.(2018)Tran, Li, and Madry]{tran2018spectral}
Brandon Tran, Jerry Li, and Aleksander Madry.
\newblock Spectral signatures in backdoor attacks.
\newblock \emph{Advances in Neural Information Processing System}, 31:\penalty0
  8000--8010, 2018.

\bibitem[Turner et~al.(2018)Turner, Tsipras, and Madry]{turner2018clean}
Alexander Turner, Dimitris Tsipras, and Aleksander Madry.
\newblock Clean-label backdoor attacks.
\newblock 2018.

\bibitem[Vapnik et~al.(1994)Vapnik, Levin, and Le~Cun]{vapnik1994measuring}
Vladimir Vapnik, Esther Levin, and Yann Le~Cun.
\newblock Measuring the vc-dimension of a learning machine.
\newblock \emph{Neural computation}, 6\penalty0 (5):\penalty0 851--876, 1994.

\bibitem[Wallace et~al.(2020)Wallace, Zhao, Feng, and
  Singh]{wallace2020concealed}
Eric Wallace, Tony~Z Zhao, Shi Feng, and Sameer Singh.
\newblock Concealed data poisoning attacks on nlp models.
\newblock \emph{arXiv preprint arXiv:2010.12563}, 2020.

\bibitem[Wang et~al.(2019)Wang, Yao, Shan, Li, Viswanath, Zheng, and
  Zhao]{wang2019neural}
Bolun Wang, Yuanshun Yao, Shawn Shan, Huiying Li, Bimal Viswanath, Haitao
  Zheng, and Ben~Y Zhao.
\newblock Neural cleanse: Identifying and mitigating backdoor attacks in neural
  networks.
\newblock In \emph{2019 IEEE Symposium on Security and Privacy (SP)}, pp.\
  707--723. IEEE, 2019.

\bibitem[Xian et~al.(2023)Xian, Wang, Srinivasa, Kundu, Bi, Hong, and
  Ding]{Backdoor1}
X.~Xian, G.~Wang, J.~Srinivasa, A.~Kundu, X.~Bi, M.~Hong, and J.~Ding.
\newblock Understanding backdoor attacks through the adaptability hypothesis.
\newblock \emph{International Conference on Machine Learning (ICML)}, 2023.

\bibitem[Zhao et~al.(2020)Zhao, Ma, Zheng, Bailey, Chen, and
  Jiang]{zhao2020clean}
Shihao Zhao, Xingjun Ma, Xiang Zheng, James Bailey, Jingjing Chen, and Yu-Gang
  Jiang.
\newblock Clean-label backdoor attacks on video recognition models.
\newblock In \emph{Proceedings of the IEEE/CVF Conference on Computer Vision
  and Pattern Recognition}, pp.\  14443--14452, 2020.

\end{thebibliography}

\appendix

\section{Proofs of Results} \label{sec:proof}

We will need the following technical lemma to prove Theorem~\ref{thm:finite}.

\begin{lemma}[Upper bound for tail probabilities]\label{lemma:densityub}
    Let $S^\eta(r) = \{x: \abs{(x-m_1)^\T \eta} \geq r \norm{\eta}_2\}$ be a set along the direction of $\eta$. Suppose $\norm{\eta}_2 \geq 4\cos(\eta, m_1-m_0)\norm{m_1-m_0}_2$. Then, we have
    \begin{align*}
        \int_{S^{\eta}(\norm{\eta}_2/2)} \nu_i(x) dx \leq h_i^{\eta}(\norm{\eta}_2/4).
    \end{align*}
\end{lemma}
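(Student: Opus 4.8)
The plan is to reduce the stated integral inequality to an elementary set inclusion and then establish that inclusion by the triangle inequality. Writing $t = \norm{\eta}_2$, the left-hand side is exactly $\P_{\nu_i}\bigl(\abs{(X-m_1)^\T\eta} \geq t^2/2\bigr)$, while the right-hand side is $h_i^\eta(t/4) = \P_{\nu_i}\bigl(\abs{(X-m_i)^\T\eta} \geq t^2/4\bigr)$. Hence it suffices to show
\[
S^\eta(t/2) = \set{x : \abs{(x-m_1)^\T\eta} \geq t^2/2} \subseteq \set{x : \abs{(x-m_i)^\T\eta} \geq t^2/4},
\]
since integrating the nonnegative density $\nu_i$ over the smaller set can only decrease the mass. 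First I would dispose of $i=1$, which is immediate: there $m_i = m_1$, both sets are centered at the same point, and since $t^2/2 \geq t^2/4$ the left set is literally contained in the right one, so no hypothesis is needed.

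The substantive case is $i=0$, where the two sets are centered at different points $m_1$ and $m_0$. Here I would decompose $(x-m_1)^\T\eta = (x-m_0)^\T\eta - (m_1-m_0)^\T\eta$ and control the offset $b \defeq (m_1-m_0)^\T\eta$. Since $b = \cos(\eta,m_1-m_0)\,\norm{\eta}_2\norm{m_1-m_0}_2$, the assumption $\norm{\eta}_2 \geq 4\cos(\eta,m_1-m_0)\norm{m_1-m_0}_2$ is precisely the statement $b \leq t^2/4$; moreover $S^\eta$ and $h_i^\eta$ depend on $\eta$ only through an absolute value and so are invariant under $\eta \mapsto -\eta$, which lets me take $\cos(\eta,m_1-m_0)\geq 0$ without loss of generality, giving $0 \leq b \leq t^2/4$. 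With $\abs{b}\leq t^2/4$ in hand, the triangle inequality shows that any $x$ with $\abs{(x-m_0)^\T\eta} < t^2/4$ satisfies
\[
\abs{(x-m_1)^\T\eta} \leq \abs{(x-m_0)^\T\eta} + \abs{b} < t^2/4 + t^2/4 = t^2/2,
\]
so such $x$ lies outside $S^\eta(t/2)$. Taking the contrapositive yields the desired inclusion, and integrating $\nu_0$ over both sides finishes the proof.

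The step I expect to be the crux is the control of the projected mean separation $b = (m_1-m_0)^\T\eta$: the entire role of the hypothesis is to guarantee that the conditional centers $m_0$ and $m_1$ are close enough \emph{along the direction $\eta$} (relative to the scale $t^2$) that a tail event measured from $m_1$ at level $t^2/2$ forces a tail event measured from $m_0$ at the smaller level $t^2/4$. Everything else is bookkeeping; in particular this lemma is a purely geometric comparison of two tails of the same measure $\nu_i$ and requires neither Assumption~\ref{asmp:dist} nor any property of the loss.
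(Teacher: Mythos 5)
Your proof is correct and takes essentially the same route as the paper's: both reduce the claim to the set inclusion $S^\eta(\norm{\eta}_2/2)\subseteq\{x:\abs{(x-m_i)^\T\eta}\geq\norm{\eta}_2^2/4\}$ via the triangle inequality, using the hypothesis to bound the projected mean offset $\abs{(m_1-m_0)^\T\eta}$ by $\norm{\eta}_2^2/4$ (the paper parametrizes points of $S^\eta$ as $m_1+c\eta+u$ with $\eta^\T u=0$, which is the same decomposition in different clothing). The one quibble is that your reduction to $\cos(\eta,m_1-m_0)\geq 0$ via $\eta\mapsto-\eta$ is not actually licensed by the one-sided hypothesis, which is vacuously true when the cosine is negative and does not transfer to $-\eta$; however, the paper's own proof silently uses the same two-sided bound $\abs{\eta^\T(m_1-m_0)}\leq\eta^\T\eta/4$, so this is an imprecision in the lemma's stated hypothesis rather than a defect of your argument.
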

\begin{proof}
    The points in $S^\eta(\norm{\eta}_2/2)$ can be represented as $m_1+c\eta + u$, where $\abs{c} \geq 1/2$ and $u\in\real^p$ with $\eta^\T u =0$.  
    Since
    $\norm{\eta}_2 \geq 4\cos(\eta, m_1-m_0)\norm{m_1-m_0}_2$ is equivalent to $\eta^\T\eta \geq 4\eta^\T(m_1-m_0)$, we have
    \begin{align*}
        \abs{(m_1+c\eta+u-m_i)^\T\eta} &\geq \abs{c}\eta^\T\eta - \abs{\eta^\T(m_1-m_i)} 
         \geq \eta^\T\eta/4,
    \end{align*}
    Thus, $S^{\eta}(\norm{\eta}_2/2) \subseteq \{x: \abs{(x-m_i)^\T \eta} \geq \norm{\eta}_2^2/4\}$, $i=0,1$. Then, we can complete the proof by recalling the definition that $h_i^\eta(r) \defeq \P_{\nu_i}( \abs{(X-m_i)^\T\eta} \geq r\norm{\eta}_2)$.
\end{proof}

\textbf{Proof of Theorem~\ref{thm:finite}.}

    \begin{proof}
        \textbf{Upper bound of} $r_n^{\cl}(\hat{f}^{\poi})$.
        First, since $\ell$ is $\alpha$-H\"{o}lder continuous, we have
        \begin{align*}
            r_n^{\cl}(\hat{f}^{\poi}) &= \E_{\mathcal{D}^{\poi}_{\eta}}[ \E_{X \sim \mu^{\cl}_X} \{\ell(\hat{f}^{\poi}(X), f^{\cl}_*(X))\}] \\
            &\leq \E_{\mathcal{D}^{\poi}_{\eta}}[ \E_{X \sim \mu^{\cl}_X} \{\ell(\hat{f}^{\poi}(X),  f^{\poi}_*(X))
             + C\abs{ f^{\poi}_*(X) - f^{\cl}_*(X)}^\alpha \}]
            \\ & \leq \E_{\mathcal{D}^{\poi}_{\eta}}[ \E_{X \sim \mu^{\cl}_X} \{\ell(\hat{f}^{\poi}(X),  f^{\poi}_*(X))\}]
            + C \E_{X \sim \mu^{\cl}_X}\{\abs{ f^{\poi}_*(X) - f^{\cl}_*(X)}^\alpha \}. \numberthis \label{eq:thm1step1}
        \end{align*}
        Next, we will bound the each term on the right-hand side. Let $\lambda = \P_{\mu^{\cl}}(Y=1)$. We have 
        \begin{align}
            \mu^{\cl}_X(x) &= \lambda \nu_1(x) + (1-\lambda)\nu_0(x), \label{eq:den1}
            \\ \mu^{\bd}_X(x) &= \lambda \nu_1(x-\eta) + (1-\lambda)\nu_0(x-\eta), \label{eq:den3}
            \\ \mu^{\poi}_X(x) &= (1-\rho) \mu^{\cl}_X(x) + \rho\mu^{\bd}_X(x)  . \label{eq:den2}
        \end{align}
         Therefore, 
        \begin{align*}
            &\quad \E_{\mathcal{D}^{\poi}_{\eta}}[\E_{X \sim \mu^{\cl}_X} \{\ell(\hat{f}^{\poi}(X),  f^{\poi}_*(X))\}]
            \\& \leq (1-\rho)^{-1} \E_{\mathcal{D}^{\poi}_{\eta}}[\E_{X \sim \mu^{\poi}_X} \{\ell(\hat{f}^{\poi}(X),  f^{\poi}_*(X))\}]
            \\&= (1-\rho)^{-1} r_n^{\poi}(\hat{f}^{\poi}).  \numberthis \label{eq:thm1step2.1}
        \end{align*}
        As for the second term, by Bayes's theorem, we have 
        \begin{align*}
            f^{\cl}_*(x) &= \P_{\mu^{\cl}}(Y=1 \mid X=x) 
            = \frac{\mu^{\cl}(Y=1, X=x)}{\mu^{\cl}_X(x)} 
            = \frac{\lambda\nu_1(x)}{\lambda\nu_1(x) + (1-\lambda)\nu_0(x)}. 
            \numberthis \label{eq:fcl}
        \end{align*}
         Similarly,
        \begin{align*}
             f^{\poi}_*(x) &= \frac{\P_{\mu^{\poi}}(Y=1, X=x)}{\mu^{\poi}_X(X=x)} 
            = \frac{(1-\rho)\lambda\nu_1(x)}{\mu^{\poi}_X(X=x)}. \numberthis \label{eq:fpoi}
        \end{align*}
        Let $S^\eta(r) = \{x: \abs{(x-m_1)^\T \eta} \geq r\norm{\eta}_2\}$ denote a tail subset along the direction of $\eta$. 
        Combining Eqs.~\eqref{eq:den1},~\eqref{eq:den2},~\eqref{eq:fcl}, and~\eqref{eq:fpoi}, we have 
        \begin{align*}
            \quad\E_{X \sim \mu^{\cl}_X} \abs{ f^{\poi}_*(x) - f^{\cl}_*(x)} 
            &= \int \frac{\lambda \nu_1(x)}{\mu^{\cl}_X(x)} \cdot\frac{  \rho \mu^{\bd}_X(x)}{\mu^{\poi}_X(x) } \mu^{\cl}_X(dx)
           \\&\leq \int_{S^\eta(\norm{\eta}_2/2)} \frac{\lambda \nu_1(x)}{\mu^{\cl}_X(x)} \cdot\frac{  \rho \mu^{\bd}_X(x)}{\mu^{\poi}_X(x) } \mu^{\cl}_X(dx)
           \\ & + \int_{\real^p \backslash S^\eta(\norm{\eta}_2/2)} \frac{\lambda \nu_1(x)}{\mu^{\cl}_X(x)} \cdot\frac{  \rho \mu^{\bd}_X(x)}{\mu^{\poi}_X(x) } \mu^{\cl}_X(dx). \numberthis \label{eq:thm1step2.2}
        \end{align*}
        With a slight abuse of notation, $\mu^{\cl}_X(dx)$ is understood as $\mu^{\cl}_X(x)dx$.
        Since $\rho \mu^{\bd}_X \leq \mu^{\poi}_X$,
        we bound the first integral in Eq.~\eqref{eq:thm1step2.2} by
        \begin{align*}
            \int_{S^\eta(\norm{\eta}_2/2)} \frac{\lambda \nu_1(x)}{\mu^{\cl}_X(x)} \cdot\frac{  \rho \mu^{\bd}_X(x)}{\mu^{\poi}_X(x) } \mu^{\cl}_X(dx)
            \leq \int_{S^\eta(\norm{\eta}_2/2)} \lambda \nu_1(x) dx = \lambda h_1^\eta(\norm{\eta}_2/2). \numberthis \label{eq:thm1step3.1}
        \end{align*}
        Invoking Lemma~\ref{lemma:densityub},
        along with Eq.~\eqref{eq:den3} and \eqref{eq:den2}, we have 
        \begin{align*}
            \int_{\real^p \backslash S^\eta(\norm{\eta}_2/2)} \frac{\lambda \nu_1(x)}{\mu^{\cl}_X(x)} \cdot\frac{  \rho \mu^{\bd}_X(x)}{\mu^{\poi}_X(x) } \mu^{\cl}_X(dx)
            &\leq \int_{\real^p \backslash S^\eta(\norm{\eta}_2/2)}  \rho \frac{  \mu^{\bd}_X(x)}{\mu^{\poi}_X(x) } \mu^{\cl}_X(dx)
            \\ &\leq  \int_{\real^p \backslash S^\eta(\norm{\eta}_2/2)}  \frac{  \rho }{1-\rho} \mu^{\bd}_X(x)dx
            \\ & \leq \int_{\real^p \backslash S^\eta(\norm{\eta}_2/2)}  \frac{  \rho }{1-\rho} \max_{i=0,1}\{\nu_i(x-\eta)\} dx
            \\ &\leq \int_{S^{\eta}(\norm{\eta}_2/2)} \frac{\rho}{1-\rho} \max_{i=0,1}\{\nu_i(x)\} dx
            \\ &\leq \frac{\rho}{1-\rho} \max_{i=0,1}\{h_i^{\eta}(\norm{\eta}_2/4)\}
            . \numberthis \label{eq:thm1step3.2}
        \end{align*}
        
        Finally, by Jensen's inequality, we have
        \begin{align*}
           \E_{X \sim \mu^{\cl}_X}\abs{ f^{\poi}_*(X) - f^{\cl}_*(X)}^\alpha 
           \leq \{\E_{X \sim \mu^{\cl}_X}\abs{ f^{\poi}_*(X) - f^{\cl}_*(X)}\}^\alpha. \numberthis \label{eq:holder}
        \end{align*}
        Plugging Inequalities~\eqref{eq:thm1step2.1},~\eqref{eq:thm1step2.2},~\eqref{eq:thm1step3.1},~\eqref{eq:thm1step3.2}, and~\eqref{eq:holder} into \eqref{eq:thm1step1}, we obtain an upper bound 
        \begin{align*}
            r_n^{\cl}(\hat{f}^{\poi}) &\leq \frac{1}{1-\rho} r_n^{\poi}(\hat{f}^{\poi}) + C\biggl[\lambda h_1^\eta(\norm{\eta}_2/2) + \frac{\rho}{1-\rho} \max_{i=0,1}\{h_i^{\eta}(\norm{\eta}_2/4)\}\biggr]^\alpha
            \\ &\leq \frac{1}{1-\rho} r_n^{\poi}(\hat{f}^{\poi}) + 
            \frac{C}{(1-\rho)^\alpha}\biggl[\max_{i=0,1}\{h_i^{\eta}(\norm{\eta}_2/4)\}\biggr]^\alpha.
        \end{align*}

        \textbf{Upper bound of} $r_n^{\bd}(\hat{f}^{\poi})$. The technique is the same. 
        First, we decompose $r_n^{\bd}(\hat{f}^{\poi})$ as 
        \begin{align*}
            r_n^{\bd}(\hat{f}^{\poi}) &= \E_{\mathcal{D}^{\poi}_{\eta}}[\E_{X\sim \mu^{\bd}_X} \{\ell(\hat{f}^{\poi}(X), f^{\bd}_*(X))\}]
            \\ &\leq \E_{\mathcal{D}^{\poi}_{\eta}}[\E_{X\sim \mu^{\bd}_X} \{\ell(\hat{f}^{\poi}(X),  f^{\poi}_*(X))\}]
            \\ &\quad+ C\E_{X \sim \mu^{\bd}_X}\{\abs{ f^{\poi}_*(X) - f^{\bd}_*(X)}^\alpha \}. \numberthis \label{eq:thm1step1'}
        \end{align*}
        By Eq.~\eqref{eq:den3}, the first term 
        \begin{align*}
             \quad &\E_{\mathcal{D}^{\poi}_{\eta}}[\E_{X\sim \mu^{\bd}_X} \{\ell(\hat{f}^{\poi}(X),  f^{\poi}_*(X))\}] 
            \\ &\leq \rho^{-1} \E_{\mathcal{D}^{\poi}_{\eta}}[\E_{X\sim \mu^{\poi}_X} \{\ell(\hat{f}^{\poi}(X),  f^{\poi}_*(X))\}]
            = \rho^{-1} r_n^{\poi}(\hat{f}^{\poi}).  \numberthis \label{eq:thm1step2.1'}
        \end{align*}
        As for the second term, since $f^{\bd}_*(X)$ equals zero, by Eq.~\eqref{eq:thm1step2.2}, we have 
        \begin{align*}
            \E_{X \sim \mu^{\bd}_X}\{\abs{ f^{\poi}_*(X) - f^{\bd}_*(X)}\} &= \E_{X \sim \mu^{\bd}_X} \{ f^{\poi}_*(X) \}
            \\ &= \int \frac{(1-\rho)\lambda\nu_1(x)}{\mu^{\poi}_X(X=x)} \mu^{\bd}_X(x) dx 
            \\ &= \rho^{-1}(1-\rho) \E_{X \sim \mu^{\cl}_X}\{\abs{ f^{\poi}_*(X) - f^{\cl}_*(X)}\}. \numberthis \label{eq:thm1step2.2'}
        \end{align*}
        Therefore, by plugging~\eqref{eq:thm1step2.2}, \eqref{eq:holder}, \eqref{eq:thm1step2.1'}, and \eqref{eq:thm1step2.2'} into \eqref{eq:thm1step1'}, we obtain
        \begin{align*}
             r_n^{\bd}(\hat{f}^{\poi}) &\leq \rho^{-1} r_n^{\poi}(\hat{f}^{\poi}) + C(\frac{1-\rho}{\rho})^{\alpha}\biggl[\lambda h_1^\eta(\norm{\eta}_2/2) + \frac{\rho}{1-\rho} \max_{i=0,1}\{h_i^{\eta}(\norm{\eta}_2/4)\}\biggr]^\alpha
             \\&\leq \rho^{-1} r_n^{\poi}(\hat{f}^{\poi}) 
             + \rho^{-\alpha}C \biggl[\max_{i=0,1}\{h_i^{\eta}(\norm{\eta}_2/4)\}\biggr]^\alpha,
        \end{align*}
        which concludes the proof.
    \end{proof}

\textbf{Proof of Theorem~\ref{thm:finite_lb}}

\begin{proof}
        \textbf{Lower bound of} $r_n^{\cl}(\hat{f}^{\poi})$. By Assumption~\ref{asmp:loss}, we have 
        \begin{align*}
            r_n^{\cl}(\hat{f}^{\poi}) &= \E_{\mathcal{D}^{\poi}_{\eta}}[ \E_{X \sim \mu^{\cl}_X} \{\ell(\hat{f}^{\poi}(X), f^{\cl}_*(X))\}] \\
            &\geq \E_{\mathcal{D}^{\poi}_{\eta}}[  \E_{X \sim \mu^{\cl}_X} \{\ell( f^{\poi}_*(X), f^{\cl}_*(X)) 
             -C \abs{\hat{f}^{\poi}(X) -  f^{\poi}_*(X)}^\alpha\}]
            \\ &\geq - C_\beta^{\alpha/\beta} C \biggl( \E_{\mathcal{D}^{\poi}_{\eta}}[  \E_{X \sim \mu^{\cl}_X} \{ \ell(\hat{f}^{\poi}(X),  f^{\poi}_*(X))\}] \biggr)^{\alpha/\beta}
            \\&\quad +C_\beta \E_{X \sim \mu^{\cl}_X} \{\abs{ f^{\poi}_*(X) - f^{\cl}_*(X)}^\beta\}.
            \numberthis \label{eq:lbstep1}
        \end{align*}
        As for the second term, recall that $\norm{\eta}_2>2c$ for a constant $c$. We then have
        \begin{align*}
           \E_{X \sim \mu^{\cl}_X} \abs{ f^{\poi}_*(x) - f^{\cl}_*(x)} 
            &= \int \frac{\lambda \nu_1(x)}{\mu^{\cl}_X(x)} \cdot\frac{  \rho \mu^{\bd}_X(x)}{\mu^{\poi}_X(x) } \mu^{\cl}_X(dx)
           \\&\geq \int_{\real^p \backslash S^\eta(\norm{\eta}_2/2)} \frac{\lambda \nu_1(x)}{\mu^{\cl}_X(x)} \cdot\frac{  \rho \mu^{\bd}_X(x)}{\mu^{\poi}_X(x) } \mu^{\cl}_X(dx)
            \\ &\geq \rho\lambda  \int_{\real^p \backslash S^\eta(\norm{\eta}_2/2)} \nu_1(x)\mu^{\bd}_X(x) dx
           \\ &\geq \rho\lambda(1-\lambda)\min_{x \in S} \nu_1(x-\eta) \int_{S} \nu_1(x) dx 
           \\ &= \rho C_5 g_1^{\eta}(c)
           , \numberthis \label{eq:lbstep2.2}
        \end{align*}
        where $S = \{x: \norm{x-m_i}_2 \leq c\}$, and $C_5=\lambda(1-\lambda)\int_{S} \nu_1(x) dx$ is a constant irrelevant of $\eta$.
        Also, since $\beta \geq 1$, by Jensen's inequality, we have 
        \begin{align*}
            \E_{X \sim \mu^{\cl}_X} \{\abs{ f^{\poi}_*(X) - f^{\cl}_*(X)}^\beta\} 
             \geq [\E_{X \sim \mu^{\cl}_X} \{\abs{ f^{\poi}_*(X) - f^{\cl}_*(X)}\}]^\beta. \numberthis \label{eq:lbjensen}
        \end{align*}
        Plugging Eqs.~\eqref{eq:thm1step2.1}, \eqref{eq:lbstep2.2}, and~\eqref{eq:lbjensen} into \eqref{eq:lbstep1}, we obtain the lower bound as 
        \begin{align*}
            r_n^{\cl}(\hat{f}^{\poi}) &\geq  \rho^\beta C_5^\beta\bigl\{  g_1^\eta(c)\bigr\}^\beta 
            - C_\beta^{\alpha/\beta} C \bigl\{r_n^{\poi}(\hat{f}^{\poi})\bigr\}^{\alpha/\beta}.
        \end{align*}

     \textbf{Lower bound of} $R_n^{\bd}$. 
    With a similar argument, we have
    \begin{align*}
        r_n^{\bd}(\hat{f}^{\poi}) &= \E_{\mathcal{D}^{\poi}_{\eta}}[\E_{X\sim \mu^{\bd}_X} \{\ell(\hat{f}^{\poi}(X), f^{\bd}_*(X))\}]
            \\ &\geq - C_\beta^{\alpha/\beta} C \biggl( \E_{\mathcal{D}^{\poi}_{\eta}}[  \E_{X \sim \mu^{\cl}_X} \{ \ell(\hat{f}^{\poi}(X),  f^{\poi}_*(X))\}] \biggr)^{\alpha/\beta}
            \\&\quad +C_\beta \E_{X \sim \mu^{\bd}_X} \{\abs{ f^{\poi}_*(X) - f^{\bd}_*(X)}^\beta\}. \numberthis \label{eq:lbstep1'}
    \end{align*}
    
    Plugging ~\eqref{eq:thm1step2.1'}, \eqref{eq:thm1step2.2'}, ~\eqref{eq:lbstep2.2} and \eqref{eq:lbjensen} into \eqref{eq:lbstep1'}, we obtain
    \begin{align*}
        r_n^{\bd}(\hat{f}^{\poi}) \geq (1-\rho)^\beta C_5^\beta\bigl\{  g_1^\eta(c)\bigr\}^\beta
            - C_\beta^{\alpha/\beta} C \bigl\{r_n^{\poi}(\hat{f}^{\poi})\bigr\}^{\alpha/\beta},
    \end{align*}
    which concludes the proof.
    \end{proof}

\textbf{Proof of Theorem~\ref{thm:smooth}}

\begin{proof}
    We prove the result for $r_n^{\cl}(\hat{f}^{\poi}) $, and the proof for $r_n^{\bd}(\hat{f}^{\poi})$ is parallel.
    Without loss of generality, we assume that $\Sigma$ is a diagonal matrix with $\Sigma_{ii} = \sigma_i$, and $m_1=0$. 
    Therefore, 
    \begin{align*}
        h_1^\eta(r) &= h_1^\eta(\abs{\eta^\T X} \geq r \norm{\eta}_2) 
        = 2\P(Z \geq  r \norm{\eta}_2/(\eta^\T \Sigma \eta)^{1/2}),  \numberthis \label{eq:thm3.1}
    \end{align*}
    where $Z$ is a standard Gaussian random variable. 
    Recall that $\norm{\eta}_2\geq 2c$, we have
    \begin{align*}
        g_1^\eta(c) &= \min_{\{\norm{x-\eta}_2\leq c\}} \nu_1(-x)  = \min_{\norm{u}_2 \leq \norm{\eta}_2/2} (2\pi)^{-p/2}\abs{\Sigma}^{-1/2}\exp\{-(\eta+u)^\T \Sigma^{-1}(\eta+u) \}
        \\ &\geq (2\pi)^{-p/2}\abs{\Sigma}^{-1/2}\exp\{-\eta^\T \Sigma^{-1}\eta - \norm{\eta}_2^2/(4\sigma_p)\}, \numberthis \label{eq:thm3.2}
    \end{align*}
    where $\abs{\Sigma}$ denotes the determinant of $\Sigma$ and the last step is due to the Cauchy inequality.
    
    It is clear from Eq.~\eqref{eq:thm3.1} and \eqref{eq:thm3.2} that to minimize the bounds in Theorems \ref{thm:finite} and \ref{thm:finite_lb}, we should choose the direction of $\eta$ to minimize $\eta^\T \Sigma^{-1}\eta$, which is exactly along the direction of $u_p$, the eigenvector of the smallest eigenvalue.
    
    Given the direction, we next consider the magnitude of $\eta$ to achieve a successful attack. 
    For the squared error loss, by Remark~\ref{re:loss}, we have $\alpha=1$ and $\beta=2$. It is also known from the Mill's inequality that the tail of a standard normal random variable $Z$ satisfies
    \begin{align*}
        \P(Z \geq z) \leq \sqrt{2/\pi} z^{-1} e^{-z^2/2}, \quad \forall z > 0.
    \end{align*} 
    Now, choosing $\eta = \norm{\eta}_2 u_p$ and invoking Theorems~\ref{thm:finite} and \ref{thm:finite_lb}, we have
    \begin{align*}
        r_n^{\cl}(\hat{f}^{\poi}) &\lesssim r_n^{\poi}(\hat{f}^{\poi}) 
            + \norm{\eta}_2^{-1} e^{-\eta^\T \eta/(32\sigma_p)}. \numberthis \label{eq:thm3step1ub}
       \\  r_n^{\cl}(\hat{f}^{\poi}) &\gtrsim e^{- \eta^\T \eta/(2\sigma_p)}
         -r_n^{\poi}(\hat{f}^{\poi}) . \numberthis \label{eq:thm3step1lb}
    \end{align*}
    
    A successful attack means that $r_n^{\cl}(\hat{f}^{\poi}) \lesssim r_n^{\cl}(\hat{f}^{\cl})$. 
    Thus, according to Eq.~\eqref{eq:thm3step1ub} and Assumption~\ref{asmp:rate}, we only need 
    \begin{align*}
        \norm{\eta}_2^{-1} e^{- \eta^\T \eta/(32\sigma_p)} \lesssim r_n^{\cl}(\hat{f}^{\cl}) \asymp n^{-\gamma}.
    \end{align*}
    Taking the logarithm on both sides, the above is equivalent to 
    \begin{align*}
        \eta^\T\eta \geq C_5 \ln n,
    \end{align*}
    where $C_5=32\sigma_p\gamma$. 

    On the other hand, when $\eta^\T\eta \leq C_6 \ln n$, where $C_6$ is a positive constant smaller than $2\sigma_p\gamma$, we can verify that
    \begin{align*}
       \lim_{n\to\infty} e^{- \eta^\T \eta/(2\sigma_p)} /  r_n^{\cl}(\hat{f}^{\cl}) = \infty.
    \end{align*}
    Therefore, Eq.~\eqref{eq:thm3step1lb} immediately implies that the corresponding attack is unsuccessful, and we complete the proof.

\end{proof}

\textbf{Proof of Theorem~\ref{thm:lowrank}}
    \begin{proof}
        The data distribution degenerates along the direction of $u$, which immediately implies that
        \begin{align*}
            h_i^{u}(r) = 0, \ g_i^{u}(r) =0, r >0.
        \end{align*}
        Thus, when $\eta = s\cdot u$ for any $s>0$, Theorem \ref{thm:finite} gives
        \begin{align*}
            r_n^{\cl}(\hat{f}^{\poi}) &\lesssim r_n^{\poi}(\hat{f}^{\poi}) , \ r_n^{\bd}(\hat{f}^{\poi}) &\lesssim r_n^{\poi}(\hat{f}^{\poi}) . \numberthis \label{eq:thm4ub}
        \end{align*}
        Under Assumption \ref{asmp:rate}, we know that
        \begin{align*}
            r_n^{\cl}(\hat{f}^{\poi})\gtrsim r_n^{\cl}(\hat{f}^{\cl}), \ r_n^{\bd}(\hat{f}^{\poi})\gtrsim r_n^{\cl}(\hat{f}^{\cl}), \numberthis \label{eq:thm4lb}
        \end{align*}
        which concludes the proof.
    \end{proof}
        
\textbf{Proof of Theorem~\ref{thm:gen}.}

\begin{proof}
        Let $f^{\cl}_{*X}=\P(Y\mid X)$ denote the conditional distribution with respect to the clean data distribution $\mu^{\cl}$, and similarly define $f^{\poi}_{*X}$ and $f^{\bd}_{*X}$. Let $\hat{f}^{\poi}$ be the learned function of the conditional distributions, that is, $\hat{f}^{\poi}(X) = \hat{f}^{\poi}_X$.
        Analogously to the proof of Theorem~\ref{thm:lowrank}, we have
        \begin{align*}
            r_n^{\cl}(\hat{f}^{\poi}) &= \E_{\mathcal{D}^{\poi}_{\eta}}[ \E_{X \sim \mu^{\cl}_X} \{\ell_p(\hat{f}^{\poi}_X, f^{\cl}_{*X})\}] \\
            &\leq \E_{\mathcal{D}^{\poi}_{\eta}}[ \E_{X \sim \mu^{\cl}_X} \{\ell(\hat{f}^{\poi}_X,  f^{\poi}_{*X})
             + C \E_{Y \sim p}\abs{ f^{\poi}_{*X}(Y) - f^{\cl}_{*X}(Y)}^\alpha \}]
            \\ & \leq \E_{\mathcal{D}^{\poi}_{\eta}}[ \E_{X \sim \mu^{\cl}_X} \{\ell(\hat{f}^{\poi}_X,  f^{\poi}_{*X})\}]
            + C \E_{X \sim \mu^{\cl}_X}\E_{Y \sim p}\{\abs{ f^{\poi}_{*X} (Y)- f^{\cl}_{*X}(Y)}^\alpha \}. 
        \end{align*}
        The first term in the right-hand size is
        \begin{align*}
            \E_{\mathcal{D}^{\poi}_{\eta}}[\E_{X \sim \mu^{\cl}_X} \{\ell(\hat{f}^{\poi}_X,  f^{\poi}_{*X})\}]
            & \leq (1-\rho)^{-1} \E_{\mathcal{D}^{\poi}_{\eta}}[\E_{X \sim \mu^{\poi}_X} \{\ell(\hat{f}^{\poi}_X,  f^{\poi}_{*X})\}]
            \\&= (1-\rho)^{-1} r_n^{\poi}(\hat{f}^{\poi}).  
        \end{align*}
        The second term equals zero, because for any $x$ such that $\mu^{\cl}_X(x) > 0$, we have
        \begin{align*}
             f^{\poi}_{*x}(Y) &= \P_{\mu^{\poi}} (Y \mid X=x) = \frac{\P_{\mu^{\poi}} (x, Y)}{\P_{\mu^{\poi}} (x)} 
            \\& = \frac{(1-\rho)\P_{\mu^{\cl}}(x, Y) }{(1-\rho)\P_{\mu^{\cl}} (x)} =  \P_{\mu^{\cl}} (Y \mid X=x) = f^{\cl}_{*x}(Y),
        \end{align*}
        noting that $\P_{\mu^{\cl}}(X+\eta)=0$. 
        As a result, we have
        \begin{align}
            r_n^{\cl}(\hat{f}^{\poi}) \leq (1-\rho)^{-1} r_n^{\poi}(\hat{f}^{\poi}) \lesssim r_n^{\cl}(\hat{f}^{\cl}).\label{eq_00}
        \end{align}
        With the same argument as Inequality~(\ref{eq_00}), we can obtain that
        \begin{align*}
            r_n^{\bd}(\hat{f}^{\poi}) \leq \rho^{-1} r_n^{\poi}(\hat{f}^{\poi}) \lesssim r_n^{\cl}(\hat{f}^{\cl}).
        \end{align*}
        The above completes the proof.
\end{proof}


\end{document}